\newtheorem{Theorem}{Theorem}
\newtheorem{Cor}{Corollary}
\newtheorem{Lemma}{Lemma} 
\newtheorem{Def}{Definition}
\begin{document}

\def\Z{\mathbb{Z}}
\def\R{\mathbb{R}}
\def\C{\mathbb{C}}
\def\S{\mathbb{S}}
\def\T{\mathbb{T}}
\def\H{\mathbb{H}}

\newcommand{\ket}[1]{|{#1}\rangle}
\newcommand{\bra}[1]{\langle{#1}|}
\newcommand{\states}{\mathcal{S}}

\def\pP{\mathcal{P}}
\def\cC{\mathcal{C}}
\def\oO{\mathcal{O}}
\def\eE{\mathcal{E}}

\def\ss{\mathfrak{s}}

\newcommand{\set}[1]{\ensuremath{ \lbrace #1 \rbrace }}
\newcommand{\Tr}{\text{Tr}}
\newcommand{\Span}[1]{\ensuremath{ \langle #1 \rangle }}
\newcommand{\CF}{\text{CF}}
\newcommand{\CFs}{{\sf{CF}}}
\newcommand{\NCF}{\text{NCF}}
\newcommand{\NCFs}{{\sf{NCF}}}
\newcommand{\blu}[1]{{\color{magenta}{#1}}}

\newtheorem{Fact}{Fact}
\newtheorem{Definition}{Definition}

\title{The cohomological  and the resource-theoretic perspective on quantum contextuality:\\ 
common ground through the contextual fraction}

\author{$\text{Cihan Okay}^{1,2}$, $\text{Emily Tyhurst}^1$ and $\text{Robert Raussendorf}^{1,2}$\vspace{4mm}\\
{\small{\em{1: Department of Physics \& Astronomy, University of British Columbia, Vancouver, BC V6T1Z1, Canada,}}} \vspace{0.2mm}\\
{\small{\em{2: Stewart Blusson Quantum Matter Institute, University of British Columbia, Vancouver, BC, Canada}}}
}

\maketitle

\begin{abstract}
 We unify the resource-theoretic and the cohomological perspective on quantum contextuality. At the center of this unification stands the notion of the contextual fraction. For both symmetry and parity based contextuality proofs, we establish cohomological invariants which are witnesses of state-dependent contextuality. We provide two results invoking the contextual fraction, namely (i) refinements of logical contextuality inequalities, and (ii) upper bounds on the classical cost of Boolean function evaluation, given the contextual fraction of the corresponding measurement-based quantum computation. 
\end{abstract}

\section{Introduction}\label{Intro}

Contextuality \cite{KS}--\cite{CSW} is a fundamental property of quantum mechanics that distinguishes it from classical physics. The classical view of a physical system assumes that there are predefined outcomes for experiments which measurements simply reveal. Non-contextuality then means that the value corresponding to any given observable is independent of which other compatible observables might be measured simultaneously. However, it turns out that for sufficiently complex quantum systems (Hilbert space dimension $\geq 3$), no non-contextual classical model can reproduce the predictions of quantum mechanics \cite{KS},\cite{Bell}. The latter is therefore called {\em{contextual}}.

Contextuality is also important for the functioning of quantum computation. Its necessity has been demonstrated for  the models of quantum computation with magic states \cite{BK}, see \cite{How}--\cite{Lov}, and measurement-based quantum computation (MBQC) \cite{RB01}, see \cite{AB}--\cite{Abram3}. It is therefore natural to consider contextuality as a computational resource.

Of interest for the present paper is the phenomenology contained in the triangle
$$
\parbox{10cm}{\includegraphics[width=9cm]{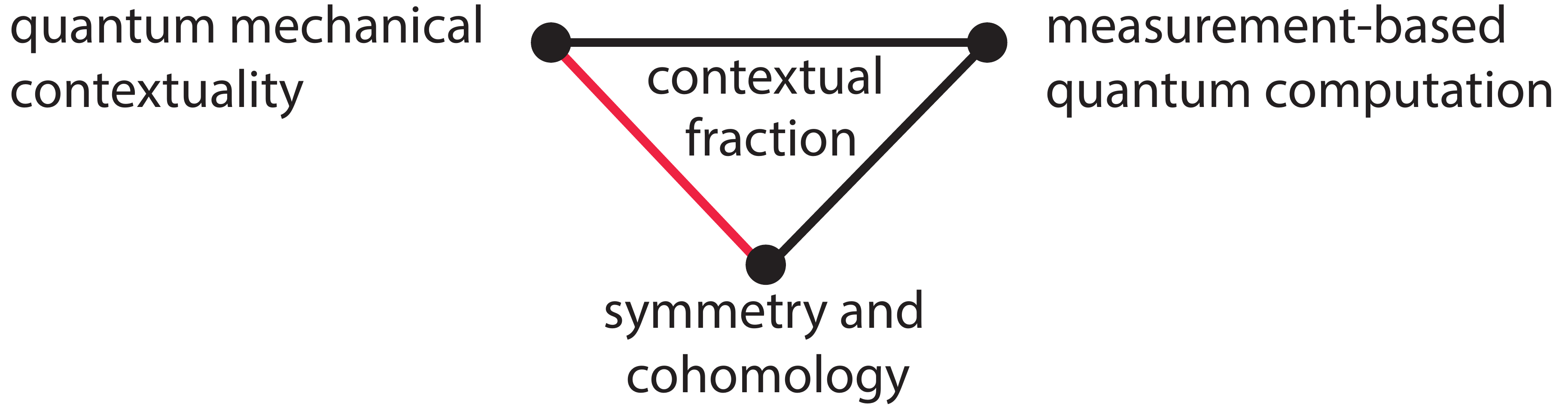}}.
$$
Therein, the connection between contextuality and MBQC (top leg) was discovered in \cite{AB}, and further studied in \cite{Hob}--\cite{Abram3}. A cohomological underpinning of contextuality, based on  \v{C}ech cohomology, was first described in \cite{Abram4}. A further cohomological framework for contextuality, which is compatible with MBQC, was described in \cite{Coho} (left leg). A cohomological formulation of MBQC (right leg) was provided in \cite{MQCcoho}. The contextual fraction \cite{ABsheaf} is a measure of the amount of contextuality present in physical settings, and it is related to the success probability of MBQCs \cite{Abram3}.

The purpose of this paper is to corroborate the relations in the left half of the above diagram, while preserving compatibility with the right half. 
 We are interested in state-dependent probabilistic contextuality proofs. Their characteristic property is that, as opposed to state-independent and state-dependent deterministic proofs, non-contextual value assignments do exist. However, no probability distribution over these value assignments reproduces the measurement statistics predicted by quantum mechanics. This is demonstrated by the violation of certain non-contextuality inequalities. 
For example, for the setting of Mermin's star it is known that a state $\rho$ is contextual w.r.t. the local observables $X_i$, $Y_i$, for $i=1,..,3$ if 
\begin{equation}\label{MeIn}
\langle X_1X_2X_3\rangle_\rho - \langle X_1Y_2Y_3\rangle_\rho - \langle Y_1X_2Y_3\rangle_\rho - \langle Y_1Y_2X_3\rangle_\rho > 2.
\end{equation} 
This is the well known Mermin inequality \cite{Merm}. It is maximized for the GHZ state, for which the above expectation value is 4. 
Here, we provide a cohomological underpinning for such probabilistic contextuality proofs. We establish the following results. 
\begin{itemize}
\item{We extend the cohomological contextuality proofs of \cite{Coho} to probabilistic scenarios.  Our results in this regard are Theorem~\ref{T_prob_beta}, and Theorem \ref{T_prob_withHamm} and  Corollary~\ref{Cor3}, invoking the cohomology of chain complexes and of groups, respectively. Our primary motivation is the relation between quantum contextuality and measurement-based quantum computation \cite{AB}-\cite{RR13}.  Quantum computation, including MBQC, is typically probabilistic, and for this reason we seek cohomological contextuality proofs that apply to probablilistic settings.}
\item{We refine Theorems~\ref{T_prob_beta} and \ref{T_prob_withHamm} by invoking the contextual fraction, see Theorems~\ref{T_frac_beta} and \ref{T_frac} (also see Theorem~3 in \cite{Abram3}). Therein, the contextual fraction arises as a resource that bounds the violation of logical non-contextuality inequalities. The cohomological aspect is retained---the maximum violation as a function of the contextual fraction is a cohomological invariant. Herein lies the unification of the resource-theoretic and the cohomological perspective.}
\item{We establish a connection between the contextual fraction and the \emph{classical} cost of evaluating Boolean functions. Namely, a Boolean function can be hard to evaluate classically only if evaluating it through MBQC requires a sizeable contextual fraction; see Theorems~\ref{T2} and \ref{T2b}.}
\end{itemize}
The remainder of this paper is organized as follows. In Section~\ref{MagStat} we review  the ``magnetostatic'' perspective on quantum contextuality through cohomology \cite{Coho}. Section~\ref{BG} covers mathematical background, such as hidden variable models, the contextual fraction, and elements of cohomology. Sections~\ref{CCP} and \ref{CCS} contain our cohomological contextuality proofs for probabilistic state-dependent probabilistic scenarios. We establish a connection between the contextual fraction and the classical cost of evaluating Boolean functions in Section~\ref{OI}. We conclude in Section~\ref{Concl}.

\section{Quantum contextuality as seen from magnetostatics}\label{MagStat}

Parity proofs of contextuality, such as Mermin's square and star \cite{Merm}, have a cohomological interpretation \cite{Coho}. When formulated in this way, these  proofs bear strong semblance to a problem in magnetism. Namely, the questions of the existence of a non-contextual value assignment and of the existence of a globally defined vector potential have essentially the same mathematical formulation. 

To illustrate this similarity, let's consider the example of Mermin's star; see Fig.~\ref{MermSq}a. Can the ten Pauli observables of the star carry consistent pre-determined measurement outcomes $\pm1$? This is not the case; an algebraic obstruction prevents it. We assume that the reader is familiar with Mermin's original argument \cite{Merm}, and do not reproduce it here.

The cohomological version of this argument is as follows. The ten observables in the star are assigned to the edges in the tessellation of the surface of a torus; See Fig.~\ref{MermSq}b. Any value assignment $s$ of an ncHVM (assuming it exists) is a function that maps a given edge $a$  to a value $s(a) \in \mathbb{Z}_2$, with the interpretation that $(-1)^{s(a)}$ is the eigenvalue obtained in the measurement of the corresponding  Pauli observable $T_a$. From the cohomological point of view, $s$ is a 1-cochain. Denote by $f$ any of the five elementary faces of the surface shown in Fig.~\ref{MermSq}b, such that $\partial f =a+b+c+d$, for four edges $a$, $b$, $c$, $d$. Then there is a binary-valued function $\beta$ defined on the faces $f$ such that $T_aT_bT_cT_d = (-1)^{\beta(f)}I$, and the operators $T_a$, $T_b$, $T_c$, $T_d$ pairwise commute. As in Mermin's original argument, these product constraints among commuting observables induce constraints among the corresponding values, namely $s(a)+s(b)+s(c) + s(d)\mod 2 =\beta(f)$. By applying this relation to the five faces of the torus, we reproduce the five constraints of Mermin's star.

\begin{figure}
\begin{center}
\begin{tabular}{lclcl}
(a) && (b) && (c) \\
\includegraphics[height=3.5cm]{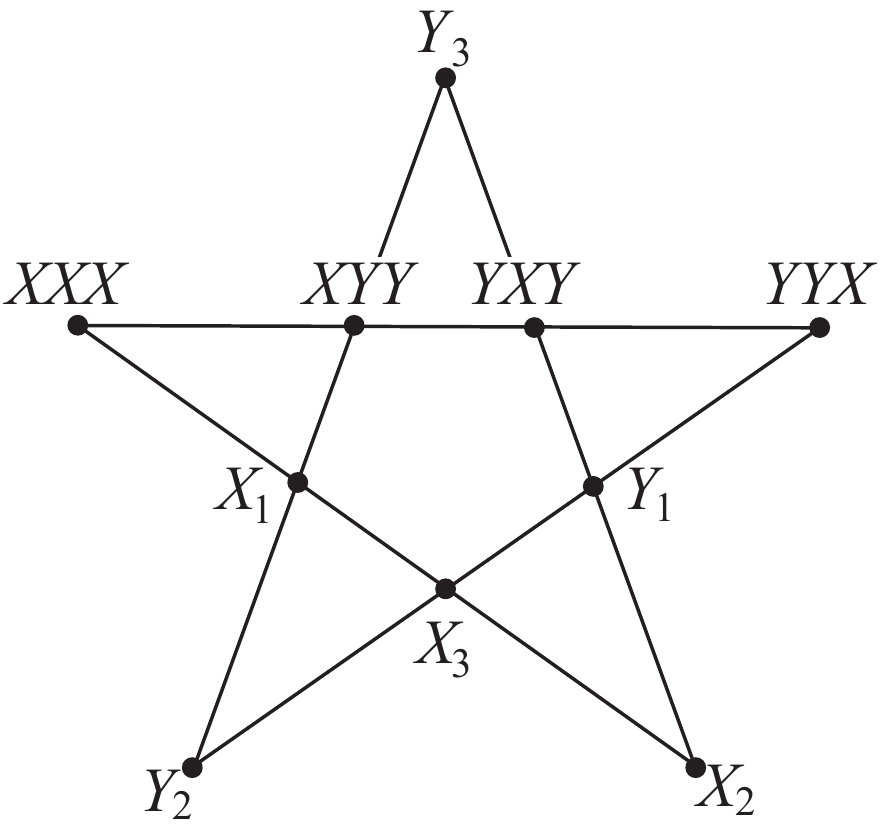} &&
\includegraphics[height=3.3cm]{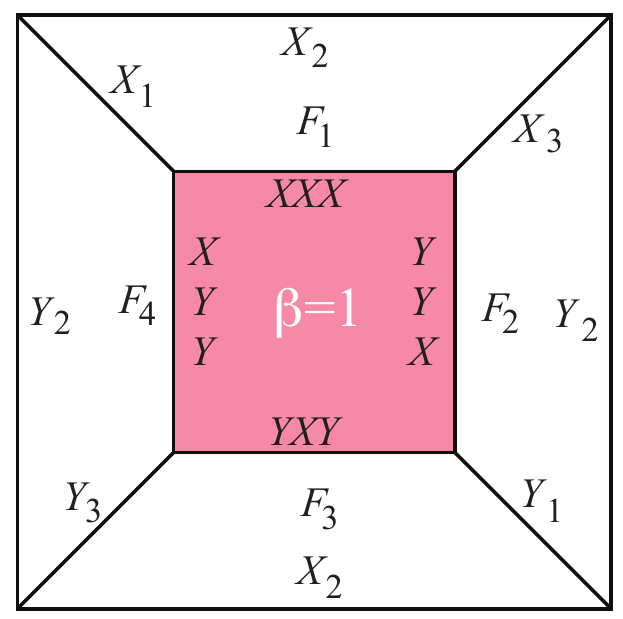} &&
\includegraphics[height=3.3cm]{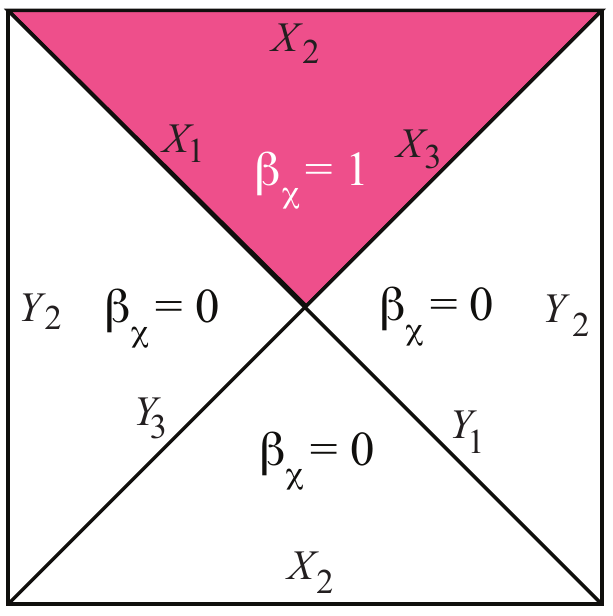}
\end{tabular}
\caption{\label{MermSq} Co-chains evaluated by a boundary operator in service of contextuality and magnetostatic proofs. (a) The Mermin star, standard representation. Each horizontal and vertical line corresponds to a measurement context, composed of four commuting Pauli observables multiplying to $\pm I$. (b) Mermin's star re-arranged on a surface. The Pauli observables are now associated with the edges, and each measurement context with the boundary of one of the five elementary faces. The exterior edges are identified as shown. 
(c) The relative complex ${\cal{C}}(E,E_0)$ for Mermin's star.}
\end{center}
\end{figure}

These constraints have a topological interpretation. Namely, $\beta$ can be interpreted as a 2-cochain. Furthermore, for any consistent context-independent value assignment $s$, the constraints between the value assignments and the function $\beta$ are given by the equation
\begin{equation}\label{beta1}
ds = \beta.
\end{equation}
Therein, $d$ the coboundary operator and the addition is $\text{mod}\; 2$. 

We can now show that for the present function $\beta$, no value assignment $s$ can satisfy Eq.~(\ref{beta1}). Namely, we observe that $\beta$ evaluates to 0 on four faces and to 1 on one face. Therefore, the integral of $\beta$ over the whole surface $F$ equals 1. Finally we note that $F$ is a 2-cycle, $\partial F = 0$. Putting all this information into Stokes' theorem (with all integration mod 2),
$$
1 = \int_F \beta  =\int_F ds = \oint_{\partial F} s =\oint_0 s =0.
$$
Contradiction. This is exactly Mermin's original argument demonstrating the non-existence of non-contextual value assignments, but in cohomological guise.

The above reasoning is not confined to Mermin's star. Rather, it applies to all parity proofs. The observables in such proofs do not need to be Pauli observables; the only requirement is that all their eigenvalues can be written in the form $\omega^z$, where $\omega:=e^{i2\pi/d}$ and $z\in \mathbb{Z}_d$, for some positive integer $d$. The general statement is the following \cite{Coho}. Every parity proof of contextuality boils down to a chain complex with a 2-cocycle $\beta$ defined on it. If the corresponding cohomology class is non-trivial, $[\beta]\neq 0$, then the setting is contextual.\smallskip

What is the connection of contextuality to magnetostatics?---The flux created by a magnetic monopole is an obstruction to the existence of a global vector potential in the same way as the above ``flux'' $\int_F \beta$ is an obstruction to the existence of a non-contextual value assignment. In more detail, consider the question of whether a given magnetic field $\textbf{B}$ can be written as the curl of some vector potential $\textbf{A}$, i.e., $\textbf{B} = \nabla \times \textbf{A}$. This possibility is ruled out by the existence of a closed surface $F$ for which $\int_F d\textbf{F}\cdot \textbf{B} \neq 0$. Here, $\textbf{A}$ is a 1-cochain (1-form) and $\textbf{B}$ is a 2-cochain (2-form). They are the counterparts of the value assignment $s$ and the function $\beta$, respectively. The magnetic flux $\int_F d\textbf{F}\cdot \textbf{B} \neq 0$ through some closed surface $F$---the counterpart of a contextuality proof $\beta(F) \neq 0$---would indicate (when observed) the presence of a magnetic monopole.\medskip 
 
To prepare for the scenarios of interest for the present work, we make, for the example of Mermin's square, the transition from state-independent to the state-dependent scenario. It is based on the same cohomological interpretation as the state-independent case; see Fig.~\ref{MermSq}b. The additional ingredient is the Greenberger-Horne-Zeilinger (GHZ) state, which is a joint eigenstate of the four non-local observables in Mermin's star, $X_1X_2X_3$, $X_1Y_2Y_3$, $Y_1X_2Y_3$ and $Y_1Y_2X_3$, with eigenvalues $1,-1,-1,-1$, respectively. We thus have the partial value assignment
\begin{equation}\label{PartAss}
s(a_{XXX})=0,\; s(a_{XYY})=1,\; s(a_{YXY})=1,\;s(a_{YYX})=1.
\end{equation}
This value assignment cannot be extended to all observables in the star, as we now show. Denote $F'=F_1+F_2+F_3+F_4$, see Fig.~\ref{MermSq}b for the labeling. Then, assuming that a value assignment exists that satisfies the relation Eq.~(\ref{beta1}), we have that 
$0 = \int_{F'} \beta = \int_{F'} ds = \int_{\partial F'} s = 1$. Contradiction. Hence, there are no non-contextual value assignments in this setting.

The present paper deals with state-dependent scenarios where the quantum state in question does not permit partial deterministic value assignments as in Eq.~(\ref{PartAss}), and where contextuality inequalities such as Eq.~(\ref{MeIn}) apply. We provide a topological underpinning for these inequalities.

\section{Mathematical background}\label{BG}

In this section, we define the notion of ``non-contextual hidden variable model'' that we will subsequently refer to, review the notion of the contextual fraction \cite{ABsheaf}, and provide necessary background on the cohomology of chain complexes and of groups.

 \subsection{Non-contextual hidden variable models}
 
We formalize the classical idea of a hidden variable model for a system, in the same manner as \cite{Coho}. Quantum states are described by density matrices $\rho$, the prescribed set of observables is $\oO$, and $M \in \mathcal{M}$ denote contexts of commuting observables in $\oO$. 

\begin{Def}\label{def_nc_hvm}
A non-contextual hidden variable model is a triple $(S,q_{\rho}, \Lambda)$, with $q_{\rho}$ a probability distribution over a set $\mathcal{S}$ of internal states. The set $\Lambda=\{\lambda_{\nu}\}_{\nu \in \mathcal{S}}$ consists of functions, $\lambda_{\nu}: \oO \rightarrow \mathbb{C}$ obeying the following constraints:

\begin{enumerate}
\item For any set $M \subset \oO$ of commuting observables there exists a quantum state $\ket{\psi}$ such that:
\begin{equation}\label{eq:qconsis}
A\ket{\psi}= \lambda_{\nu}(A) \ket{\psi}, \forall A \in M
\end{equation}

\item The distribution $q_{p}$ satisfies:

\begin{equation}\label{eq:trq}
\text{tr}(A \rho)= \sum_{\nu \in \mathcal{S}} \lambda_{\nu}(A) q_{\rho}(\nu), \forall A \in \oO
\end{equation}
\end{enumerate}
\end{Def}
From condition (\ref{eq:qconsis}) it follows that for any triple of commuting observables $A,B, AB \in \oO$, the functions $\lambda_{\nu}$ obey
\begin{equation}\label{LambdaConstr}
\lambda_{\nu}(AB)= \lambda_{\nu}(A) \lambda_{\nu}(B).
\end{equation}

 \subsection{The contextual fraction} 
 
An empirical model predicts the outcome distributions for compatible joint measurements on a physical state \cite{ABsheaf}. Such models can be used to describe quantum mechanical systems, among other things, and this is what we use them for here. An empirical model $e$ assigns an outcome probability distribution $e_M$ to every set $M$ of compatible measurements. The probability distributions $e_M$ have to satisfy consistency conditions; essentially they need to be compatible under marginalization \cite{ABsheaf}.

From the perspective of contextuality, one may ask how much of an empirical model $e$ can be described by a non-contextual hidden variable model (ncHVM). Splitting the model $e$ into a contextual part $e^C$ and a non-contextual part $e^{NC}$,
\begin{equation}
e=\lambda e^{NC} + (1-\lambda) e^C,\; 0 \leq \lambda \leq 1,
\end{equation}
we want to know what the maximum possible value of $\lambda$ is. This maximum value is called the non-contextual fraction ${\sf{NCF}}(e)$ of the model $e$,
\begin{equation}
{\sf{NCF}}(e) := \max_{e^{NC}} \lambda.
\end{equation}
The contextual fraction ${\sf{CF}}(e)$ is then defined to be the probability weight of the contextual part $e^{C}$,
\begin{equation}
{\sf{CF}}(e):=1-{\sf{NCF}}(e).
\end{equation}

\subsection{Cohomology of chain complexes} 
 
In \cite{Coho} a cohomological framework is introduced to study contextuality proofs. We first recall some notions from this framework and present a generalization which is suitable for probabilistic scenarios. Our approach is to generalize the underlying cohomological structure of state-dependant deterministic scenarios. In the deterministic case the cohomological basis of such scenarios consists of a relative complex $\cC(E,E_\Psi)$ which depends on a given state $\ket{\Psi}$. The operators corresponding to the labels in $E_\Psi$ stabilizes the resource state $\ket{\Psi}$. In the symmetry-based version there is a symmetry group acting on the labels with the extra condition on  the transformed eigenvalues. In the present framework we will start with a pair $E_0\subset E$ where $E_0$ replaces $E_\Psi$. The eigenvalues are replaced by a function $\chi$ defined on $E_0$, and a symmetry group is required to preserve $\chi$.

Let $\oO$ denote a set of observables of the form $\set{\omega^k T_a|\; a\in E,\; k \in \Z_d}$ where $E$ is a set of labels for the observables under consideration. We say $a,b\in E$ commutes whenever the corresponding operators commute $T_aT_b=T_bT_a$. The operator $T_a$ has the eigenvalues given by $\set{\omega^k|\;k\in \Z_d}$. 
For commuting observables $T_a$, $T_b$ the operators multiply as
\begin{equation}\label{prod_T}
T_{a+b} = \omega^{\beta(a,b)} T_aT_b.
\end{equation}
This gives a corresponding addition operation for the label set. Given commuting $a,b\in E$ the sum $a+b$ is defined using Eq.~(\ref{prod_T}).  

The main object in \cite{Coho} is the (co)chain complex $\cC(E)$.
For the construction of this complex $E$ is required to 
satisfy the property that $a+b\in E$ for commuting labels $a,b\in E$.
 Compared to  \cite{Coho},   we modify the definition of the chain complex so that it applies to arbitrary $E$.
The definition of $C_0(E)$ and $C_1(E)$ remains the same but we change $C_2(E)$ and $C_3(E)$.

The chain complex ${\cal{C}}_*(E)$ consists of one vertex, and edges, faces and volumes. It is constructed as follows.
\begin{enumerate}
\item $C_0(E)=\mathbb{Z}_d$, geometrically we have a single vertex.
\item $C_1(E)$ is freely generated as a $\mathbb{Z}_d$-module by the elements $[a]$ where $a\in E$. These labels correspond to the set of edges.
\item $C_2(E)$ is freely generated as a $\mathbb{Z}_d$-module by the pairs $[a|b]$ where $a,b\in E$ commutes and $a+b\in E$. The pairs $(a,b)$ correspond to faces. We denote the set of all faces by $F$.
\item $C_3(E)$ is freely generated as a $\mathbb{Z}_d$-module by the triples $[a|b|c]$ where $a,b,c\in E$ pair-wise commute and the labels $a+b$, $b+c$, and $a+b+c$ belong to $E$. These triples $(a,b,c)$ correspond to volumes and the set of volumes will be denoted by $V$.
\end{enumerate} 
The differentials in the complex
$$ 
C_3(E)\stackrel{\partial}{\rightarrow} C_2(E) \stackrel{\partial}{\rightarrow} C_1(E) \stackrel{\partial}{\rightarrow} C_0(E)
$$
are defined as before
$$
\partial[a]=0,\;\; \partial[a|b]=[b]-[a+b]+[a],\;\; \partial[a|b|c]=[b|c]-[a+b|c]+[a|b+c]-[a|b].
$$
 Given the chain complex $\cC_*(E)$, there is a corresponding cochain complex $\cC^*(E)$ as usual.
 The cochains $C^n(E)$ are $\Z_d$-linear maps $C_n(E)\rightarrow \Z_d$. Equivalently we can think of the cochains as functions on the basis elements of $C_n(E)$. For example in degree $3$ we have that $C^3(E)$ is given by the set of functions $\gamma:V\rightarrow \Z_d$. The abelian group structure on the cochains are obtained by addition of functions: Given   $\gamma,\gamma'$ the sum $\gamma+\gamma'$ is the function defined by $(\gamma+\gamma')(v)=\gamma(v)+\gamma'(v)$ for all $v\in V$. Similarly lower degree cochains can be described as functions on the basis elements, and the abelian group structure is given by addition of functions. The coboundary operator $d:C^n(E)\rightarrow C^{n+1}(E)$ is defined by $d\alpha(x)=\alpha(\partial(x))$ where $\alpha\in C^n(E)$ and $x\in C_{n+1}(E)$.

From Eq.~(\ref{prod_T}) and the above definition of ${\cal{C}}(E)$ it is clear that $\beta$ is a 2-cochain in ${\cal{C}}^*(E)$. We recall from \cite{Coho} the following properties of $\beta$.
\begin{Lemma}[\cite{Coho}]\label{BetaProp}
$\beta$ is a 2-cocycle, $d\beta =0$. Furthermore, if a   value assignment $\ss:E\longrightarrow \mathbb{Z}_d$ exists, then $\beta$ is trivial,
\begin{equation}\label{beta_ds}
\beta = - d\ss.
\end{equation}
\end{Lemma}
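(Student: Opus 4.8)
The plan is to verify both assertions by direct computation against the product law~(\ref{prod_T}), exploiting the fact that $\beta$ is precisely the factor system of the projective multiplication of the operators $T_a$. Throughout I write $\beta(x,y)$ for the value of the $2$-cochain $\beta$ on the face $[x|y]$.

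For the cocycle condition I would evaluate $d\beta$ on an arbitrary generator $[a|b|c]$ of $C_3(E)$. By the definition $d\alpha(x)=\alpha(\partial x)$ and the formula for $\partial[a|b|c]$, one gets $d\beta([a|b|c]) = \beta(b,c) - \beta(a+b,c) + \beta(a,b+c) - \beta(a,b)$, so that $d\beta = 0$ is equivalent to the $2$-cocycle identity $\beta(a,b)+\beta(a+b,c) = \beta(a,b+c)+\beta(b,c) \pmod d$. To establish this I would expand the triple product $T_{a+b+c}$ in two associatively equivalent ways using~(\ref{prod_T}): grouping as $(a+b)+c$ yields $T_{a+b+c}=\omega^{\beta(a+b,c)+\beta(a,b)}\,T_aT_bT_c$, while grouping as $a+(b+c)$ yields $T_{a+b+c}=\omega^{\beta(a,b+c)+\beta(b,c)}\,T_aT_bT_c$. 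Since operator multiplication is associative, and the pairwise-commutation and membership hypotheses built into the definition of a volume $[a|b|c]$ guarantee that all the intermediate sums $a+b$, $b+c$, $a+b+c$ lie in $E$ (so every operator above is legitimately defined), the two exponents must agree mod $d$. This is exactly the required identity, hence $d\beta=0$.

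For the triviality statement I would first make the hypothesis explicit: a value assignment $\ss:E\to\Z_d$ is one for which the prescription $\lambda(T_a)=\omega^{\ss(a)}$ respects the multiplicative constraint~(\ref{LambdaConstr}) on commuting labels. Rewriting~(\ref{prod_T}) as $T_aT_b=\omega^{-\beta(a,b)}T_{a+b}$ and applying $\lambda$ gives $\omega^{\ss(a)+\ss(b)}=\omega^{-\beta(a,b)+\ss(a+b)}$, i.e. the consistency relation $\ss(a+b)=\ss(a)+\ss(b)+\beta(a,b)\pmod d$. On the other hand, since $\ss$ is a $1$-cochain, its coboundary on a face reads $d\ss([a|b])=\ss(\partial[a|b])=\ss(a)+\ss(b)-\ss(a+b)$. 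Substituting the consistency relation into this expression collapses it to $d\ss([a|b])=-\beta(a,b)$, valid on every face, which is the claimed equation $\beta=-d\ss$.

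The computations themselves are short; the only real subtlety is bookkeeping. For the cocycle part, one must recognize $\beta$ as an operator factor system so that the identity is the standard $2$-cocycle relation for a central extension, and one must check that the volume conditions on $[a|b|c]$ supply exactly the membership needed to make the associativity argument legitimate. For the triviality part, the point to get right is tracking the scalar $\omega^{-\beta(a,b)}$ when converting between the product $T_aT_b$ and the single operator $T_{a+b}$, as this is what fixes the sign in $\beta=-d\ss$.
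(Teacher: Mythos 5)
Your proposal is correct and follows essentially the same route as the paper, which states this lemma by citing \cite{Coho} and sketching exactly your two ingredients: the cocycle condition $d\beta=0$ as a consequence of the associativity $(T_aT_b)T_c = T_a(T_bT_c)$, and Eq.~(\ref{beta_ds}) as the multiplicativity constraint Eq.~(\ref{LambdaConstr}) restated cohomologically via Eq.~(\ref{prod_T}). Your write-up simply fills in the details of that sketch, including the worthwhile observation that the membership conditions in the definition of a volume $[a|b|c]$ are precisely what make the associativity argument legitimate in the modified complex.
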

This lemma is the content of Eq.~(12)  and Lemma~2 in \cite{Coho}. Eq.~(\ref{beta_ds}) is just the condition Eq.~(\ref{LambdaConstr}) restated in cohomological fashion, using the definition Eq.~(\ref{prod_T}) of $\beta$. The cocycle condition $d\beta=0$ is a consequence of the associativity of operator multiplication, $(T_aT_b)T_c=T_a(T_bT_c)$.

Lemma~\ref{BetaProp} describes state-independent contextuality proofs. To the present purpose it is just an introduction. Here we are interested in the state-dependent case, and more specifically, in the probabilistic state-dependent case. The deterministic state-dependent case was already treated in \cite{Coho}, and therein, an important role is played by the set $E_\Psi \subset E$ corresponding to the stabilizer of the state $|\Psi\rangle$ in ${\cal{O}}$. In the present probabilistic scenario, the stabilizer of the state $\rho$ in ${\cal{O}}$ is generally trivial, i.e., it consists of the identity operator only. Nonetheless, the set $E_\Psi$ has a non-trivial counterpart $E_0$ in the present discussion, which we now introduce. 

$E_0 \subset E$ chosen such that two properties hold: (i) After removing from Eq.~(\ref{LambdaConstr}) all constraints that only involve observables $T_a$ with $a \in E_0$, the parity obstruction to the existence of value assignments disappears, and hence value assignments can exist. (ii) The resulting ncHVMs imply non-contextuality inequalities involving the expectation values $\langle T_a\rangle$, $a \in E_0$, which are violated by quantum mechanics. Our goal is to construct such inequalities.

For concreteness, let us look at the example of the state-dependent Mermin star. In this case, $E_0=\{a_{XXX},a_{XYY},a_{YXY},a_{YYX}\}$. The belonging constraint $\ss(a_{XXX})+ \ss(a_{XYY})+ \ss(a_{YXY})+ \ss(a_{YYX}) = 1 \mod 2$ is removed, and in result, non-contextual value assignments become possible. They imply the Mermin inequality $\langle X_1X_2X_3\rangle +  \langle X_1Y_2Y_3\rangle + \langle Y_1X_2Y_3\rangle  + \langle Y_1Y_2X_3\rangle \leq 2$ for ncHVMs. It is violated by quantum mechanics.\smallskip 

We now describe the cohomological underpinning for the probabilistic state-dependent case.  We can construct the chain complex $\cC_*(E_0)$ for the subset $E_0$. The inclusion $E_0\subset E$ gives an inclusion of the chain complexes $\cC_*(E_0)\subset \cC_*(E)$. 
Geometrically we can collapse the edges, faces, and volumes coming from $E_0$ and look at the resulting space.
In terms of chain complexes this idea is expressed using the language of relative complexes. The relative complex $\cC_*(E,E_0)$ is defined as the quotient $\cC_*(E)/\cC_*(E_0)$ meaning that in each degree $C_n(E,E_0)$ is given by the quotient group $C_n(E)/C_n(E_0)$. The basis is obtained by erasing the basis elements of $C_n(E_0)$ from the basis elements of the larger complex $C_n(E)$.
 The relative boundary operator $\partial_R$ is induced from the boundary operator $\partial$ of $\cC_*(E)$, and in effect it can be calculated by applying $\partial$ and removing the chains which lie in $\cC_*(E_0)$.   

The relation between the subcomplex and the relative complex is expressed as an exact sequence
$$
0\rightarrow \cC_*(E_0) \rightarrow \cC_*(E) \rightarrow \cC_*(E,E_0)\rightarrow 0,
$$
and similarly, there exists a corresponding exact sequence for the cochain complexes
$$
0\rightarrow \cC^*(E,E_0) \rightarrow \cC^*(E) \rightarrow \cC^*(E_0)\rightarrow 0.
$$
 The relative cochain complex $C^n(E,E_0)$ consists of cochains in $C^n(E)$ whose restriction to $C_n(E_0)$ is zero. The relative coboundary operator is the same as the coboundary operator of $\cC^*(E)$.
We studied both of these constructions in \cite{Coho} for a special subset $E_\Psi$ associated to a given state $\ket{\Psi}$.

We fix a  partial value assignment $\chi:E_0\rightarrow \Z_d$ on $E_0$, and ask whether it can be extended to all of $E$.   In practice the chain complex of $E_0$ will be one dimensional i.e. $C_n(E_0)=0$ for $n=2,3$. Although our results work for general $E_0$ we will make this assumption throughout.
Under this assumption a  partial value assignment on $E_0$ is simply a function, since there are no faces imposing compatibility. 
With respect to $\chi$ we can begin our discussion of relative complexes by defining
\begin{equation}\label{beta_rel}
\beta_\chi = \beta + d\bar\chi 
\end{equation}
where $\bar\chi$ is the extension of $\chi$  to $E$ by setting $\chi(a')=0$ for all $a'\in E-E_0$. 
We can regard $\beta_\chi$ as a cochain in the relative complex $\cC^*(E,E_0)$ since it vanishes on  $C_2(E_0)$.

\begin{Lemma}\label{chiCocyc}
The cochain $\beta_\chi$ is a cocycle, $d\beta_\chi=0$.
\end{Lemma}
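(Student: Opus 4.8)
The plan is to reduce the statement to two facts already in hand: the linearity of the coboundary operator $d$, and the identity $d^2 = 0$, used in conjunction with Lemma~\ref{BetaProp}, which asserts that $\beta$ is itself a cocycle. Since $\beta_\chi$ is defined in Eq.~(\ref{beta_rel}) as $\beta_\chi = \beta + d\bar\chi$, with $\bar\chi$ a $1$-cochain and $\beta$ a $2$-cochain, applying $d$ and using linearity gives
\[
d\beta_\chi = d\beta + d(d\bar\chi) = d\beta + d^2\bar\chi.
\]
By Lemma~\ref{BetaProp} the first term vanishes, $d\beta = 0$, while the second is $d^2\bar\chi$, which vanishes because $d^2 = 0$. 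Hence $d\beta_\chi = 0$. Put differently, $d\bar\chi$ is a coboundary and therefore automatically a cocycle, and $\beta$ is a cocycle by the previous lemma, so their sum is a cocycle; the entire content of the lemma is this one observation.

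The only point genuinely requiring verification is $d^2 = 0$ on $\cC^*(E)$, equivalently $\partial^2 = 0$ on $\cC_*(E)$, because the complex here uses the modified modules $C_2(E)$ and $C_3(E)$ rather than those of the deterministic setting. I would check this directly from the given formulas $\partial[a|b] = [b]-[a+b]+[a]$ and $\partial[a|b|c] = [b|c]-[a+b|c]+[a|b+c]-[a|b]$: applying $\partial$ once more to $\partial[a|b|c]$ produces the usual telescoping cancellation of a bar-type differential, in which each of the terms $[a]$, $[b]$, $[c]$, $[a+b]$, $[b+c]$, $[a+b+c]$ appears twice with opposite sign. This cancellation is valid precisely because the intermediate labels $a+b$, $b+c$, $a+b+c$ all lie in $E$, which is exactly the membership condition built into the definition of $C_3(E)$ (and which guarantees that the constituent $2$-chains are well defined). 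Dualizing yields $d^2 = 0$ on cochains.

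Finally, I would note the bookkeeping concerning the relative structure: although $\beta_\chi$ is regarded as an element of $\cC^*(E,E_0)$, under the standing assumption $C_2(E_0) = 0$ every $2$-cochain restricts trivially to $C_2(E_0)$, so $\beta_\chi$ is automatically a relative cochain, and the relative coboundary coincides with the absolute one. Consequently the computation above, performed in $\cC^*(E)$, establishes the cocycle condition in the relative complex as well. I do not expect any real obstacle here; the argument is essentially immediate from $d^2=0$ and Lemma~\ref{BetaProp}, and the only work is confirming that the modified complex still satisfies $\partial^2=0$ and that passing to the relative complex respects the coboundary.
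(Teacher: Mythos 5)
Your proof is correct, and its core is the same as the paper's: both expand $\beta_\chi = \beta + d\bar\chi$ as in Eq.~(\ref{beta_rel}) and conclude from $d\beta = 0$ (Lemma~\ref{BetaProp}, equivalently ``$\beta$ vanishes on boundaries'') together with $dd\bar\chi = 0$. The one place where you diverge is the treatment of the relative structure. The paper's proof evaluates against the relative boundary, writing $\beta_\chi(\partial_R v) = \beta_\chi(\partial v) - \beta_\chi(\partial v - \partial_R v)$ and using that the relative cochain $\beta_\chi$ kills the piece $\partial v - \partial_R v$ lying in $C_2(E_0)$; this bookkeeping is precisely what keeps the lemma valid for a general pair $E_0 \subset E$, in line with the paper's remark that its results hold for general $E_0$ even though one-dimensionality of $\cC_*(E_0)$ is assumed ``throughout.'' You instead invoke that standing assumption $C_2(E_0)=C_3(E_0)=0$ to identify the relative and absolute coboundaries, which is legitimate within the paper's declared scope but gives up the extra generality the paper's own computation quietly retains (for general $E_0$ one would also need $\chi$ compatible with products inside $E_0$, so that $\beta_\chi$ really vanishes on $C_2(E_0)$). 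In the other direction, your explicit check that $\partial^2=0$ survives the modification of $C_2(E)$ and $C_3(E)$ --- the telescoping cancellation in $\partial\partial[a|b|c]$, enabled by the membership conditions $a+b,\, b+c,\, a+b+c \in E$ --- is a point the paper leaves implicit (deferring to Section 4.2 of \cite{Coho}), and spelling it out is worthwhile exactly because the complex here differs from the one in \cite{Coho}.
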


\begin{proof}
We are working with relative complexes hence the coboundary is defined with respect to the relative boundary $\partial_R$. For  $v\in C_3(E)$ we have
$$
d\beta_\chi(v) = \beta_\chi(\partial_Rv)= \beta_\chi(\partial v)- \beta_\chi(\partial v-\partial_Rv)=\beta_\chi(\partial v)= \beta(\partial v) + d\bar\chi(\partial v)= 0,
$$
since $\beta$ vanishes on $\partial v-\partial_Rv$. In the last equality we used the fact that $\beta$ vanishes on boundaries (as proved in Section 4.2 of \cite{Coho}), and $d\bar\chi(\partial v)=dd\bar\chi(v)=0$.  
\end{proof}

\begin{Theorem}\label{Thm_beta}
A value assignment $\mathfrak{s}:E  \longrightarrow \mathbb{Z}_d$ with $\mathfrak{s}|_{E_0} = \chi$ exists only if $[\beta_\chi]=0$  in $H^2(E,E_0)$.
\end{Theorem}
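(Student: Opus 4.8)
The plan is to prove the stated implication directly: assume a value assignment $\mathfrak{s}:E\to\Z_d$ with $\mathfrak{s}|_{E_0}=\chi$ exists, and exhibit an explicit relative $1$-cochain that trivializes $\beta_\chi$. The starting point is Lemma~\ref{BetaProp}. Since $\mathfrak{s}$ is a genuine value assignment on \emph{all} of $E$, the lemma forces $\beta=-d\mathfrak{s}$ as cochains in $\cC^*(E)$, cf. Eq.~(\ref{beta_ds}). Substituting this into the defining relation Eq.~(\ref{beta_rel}), $\beta_\chi=\beta+d\bar\chi$, I immediately obtain $\beta_\chi=d(\bar\chi-\mathfrak{s})$, so that $\beta_\chi$ is a coboundary already at the level of the absolute complex $\cC^*(E)$.

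The actual content is upgrading this to a statement in relative cohomology. Recall that $H^2(E,E_0)$ is computed from the relative cochain complex, whose members vanish on the chains of $E_0$, and that the relative coboundary agrees with the ordinary $d$. It therefore suffices to verify that the candidate primitive $\eta:=\bar\chi-\mathfrak{s}$ is a \emph{relative} $1$-cochain, i.e. that $\eta$ annihilates $C_1(E_0)$. As $C_1(E_0)$ is generated by the edges $[a]$ with $a\in E_0$, this reduces to evaluating $\eta([a])=\bar\chi(a)-\mathfrak{s}(a)$ for $a\in E_0$. At this point the two defining conditions coincide: $\bar\chi$ restricts to $\chi$ on $E_0$ by construction, while $\mathfrak{s}$ restricts to $\chi$ by hypothesis, so $\eta([a])=\chi(a)-\chi(a)=0$. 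Hence $\eta\in C^1(E,E_0)$, we have $\beta_\chi=d\eta$ in the relative complex, and consequently $[\beta_\chi]=0$ in $H^2(E,E_0)$.

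I expect the only real obstacle to be careful bookkeeping around the relative structure, rather than anything deep. The identity $\beta_\chi=d(\bar\chi-\mathfrak{s})$ comes for free in the absolute complex, but the theorem concerns the relative class, and a coboundary in $\cC^*(E)$ need not represent $0$ in $H^2(E,E_0)$ unless its primitive lies in the relative subcomplex. The hypothesis $\mathfrak{s}|_{E_0}=\chi$ is precisely what makes $\bar\chi-\mathfrak{s}$ relative, so I would emphasize that this matching boundary condition is used essentially, not merely for bookkeeping. A secondary point to state cleanly is that Lemma~\ref{BetaProp} is legitimately invoked for the full complex $\cC^*(E)$, since $\mathfrak{s}$ is defined on all of $E$; the pair $(E,E_0)$ and the datum $\chi$ enter only through the relative reduction in the final step.
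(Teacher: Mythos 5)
Your proof is correct and follows essentially the same route as the paper's: both define the primitive $\mathfrak{s}-\bar\chi$ (yours is the negative, $\bar\chi-\mathfrak{s}$, which only flips a sign), use the hypothesis $\mathfrak{s}|_{E_0}=\chi$ to place it in the relative complex $C^1(E,E_0)$, and conclude $[\beta_\chi]=0$ via Lemma~\ref{BetaProp}. Your added emphasis that the boundary condition is what makes the primitive relative is a nice clarification, but the argument is the same.
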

A value assignment $\chi$ on $E_0$ cannot be extended to $E$ if $[\beta_\chi]\neq0$.  
\begin{proof}
Assume that there exists a value assignment $\ss$  for $E$ that satisfies $\ss|_{E_0}=\chi$. Now let  $s= \ss - \overline{\chi}$. Thus, $s|_{E_0} = 0$, and hence $s$ lives in the relative complex $C^1(E,E_0)$. Further, $ds = d\ss - d\overline{\chi} = - \beta - d\overline{\chi} = -\beta_\chi$, and thus $[\beta_\chi]=0$. 
\end{proof}

\subsection{Symmetry and group cohomology} 

A symmetry group $G$ is a transformation $\oO\rightarrow \oO$ that acts on $T_a$ by the equation 
\begin{equation}\label{action}
g(T_a) = \omega^{\tilde \Phi_g(a)} T_{ga}
\end{equation}
and satisfies  $g(AB) = g(A)g(B)$  for commuting operators $A,B,AB \in \oO$. 
If $\ss:E\rightarrow \Z_d$ is   a value assignment then the function defined as
\begin{equation}\label{s_update}
g\cdot \ss(a) := \ss(ga) + \tilde\Phi_g(a)
\end{equation}
is a value assignment, too \cite{Coho}.
The approach in \cite{Coho} is to interpret  $\tilde\Phi$ as a cocycle living in a suitable complex. We generalize this approach in a way that is applicable to probabilistic scenarios extending the deterministic case.

 Let $H\subset G$ be a subgroup of our symmetry group which preserves the set $E_0$ and satisfies 
\begin{equation}\label{H_chi}
 h\cdot \chi =\chi 
\end{equation}
for all $h\in H$. In the relative version we define the cochain 
\begin{equation} \label{phi_rel}
\tilde{\Phi}_\chi = \tilde{\Phi} + d^h\bar\chi
\end{equation}
where $\bar{\chi}$ is the extension of $\chi$ as before,   and $d^h$ denotes the group cohomology coboundary: $d^h\bar\chi(g,a)=\chi(ga)-\chi(a)$ for all $g\in H$ and $a\in E$.  
We will regard $\tilde{\Phi}_\chi$ as a cochain in a group cohomology complex. Next let us describe the complex.  The $H$ action on $E$ given in Eq.~(\ref{action}) induces an action on $C_q(E,E_0)$ and $C^q(E,E_0)$ where $0\leq q\leq 3$. Then we can consider the complex $C^p(H,C^q(E,E_0))$ for a fixed $q$. Here the coefficient module $M=C^q(E,E_0)$ has non-trivial action. Note that the cohomology group $H^*(H,M)$ can be defined for any $\Z_d$--module $M$ with an action of $H$ \cite{Brown}.
For a fixed $q$ the group cohomology cochain complex is given by
$$
C^q(E,E_0)=C^0(H,C^q(E,E_0)) \stackrel{d^h}{\longrightarrow} C^1(H,C^q(E,E_0)) \stackrel{d^h}{\longrightarrow} C^2(H,C^q(E,E_0)) \rightarrow \cdots
$$  
where $d^h$ will be referred to as the horizontal coboundary. Our objects of interest are as follows: $s\in C^1(E,E_0)$, $\beta_\chi\in C^2(E,E_0)$, and $\tilde{\Phi}_\chi \in C^1(H,C^1(E,E_0))$.
The group cohomology coboundary on $s$ and $\beta_\chi$ is given by
$$
d^hs(g,a) = s(ga)-s(a),\;\;\;\; d^h\beta_\chi(g,f)=\beta_\chi(gf)-\beta_\chi(f)
$$
and on $\tilde{\Phi}_\chi$ we have
$$
 d^h\tilde{\Phi}_\chi(g_1,g_2,a)= \tilde{\Phi}_\chi(g_1,g_2a)-\tilde{\Phi}_\chi(g_1g_2,a)+\tilde{\Phi}_\chi(g_2,a).
$$
Instead of fixing $q$ we can fix $p$ and construct a cochain complex using the coboundary of the relative complex $\cC(E,E_0)$ to obtain
$$
C^p(H,\Z_d)=C^p(H,C^0(E,E_0)) \stackrel{d^v}{\longrightarrow} C^p(H,C^1(E,E_0)) \stackrel{d^v}{\longrightarrow} C^p(H,C^2(E,E_0)) \rightarrow \cdots
$$  
where $d^v$ is the vertical coboundary.
For example,   we have $d^v\tilde\Phi_\chi(g,f)=\tilde\Phi_\chi(g,\partial_Rf)$ where $\partial_R$ is the relative boundary map.

\begin{Lemma}\label{phi_property}
The cochain $\tilde{\Phi}_\chi $ defined by Eq.~(\ref{phi_rel}) is a cocyle (with respect to $d^h$) in  $C^1(H,C^1(E,E_0))$ and satisfies 
\begin{equation}\label{PhiBeta}
d^v\tilde{\Phi}_\chi = d^h\beta_\chi.
\end{equation}
\end{Lemma}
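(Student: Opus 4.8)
The plan is to establish both assertions of Lemma~\ref{phi_property} by direct computation from the definition $\tilde\Phi_\chi = \tilde\Phi + d^h\bar\chi$ in Eq.~(\ref{phi_rel}), reducing everything to known properties of $\tilde\Phi$, $\beta$, and the two commuting coboundary operators $d^h$ and $d^v$. The conceptual point I want to exploit is that the horizontal coboundary $d^h$ (group cohomology direction) and the vertical coboundary $d^v$ (the relative chain-complex direction, induced by $\partial_R$) commute as operators on the double complex $C^p(H,C^q(E,E_0))$. Since $d^h\bar\chi$ is by construction a horizontal coboundary, it is automatically a horizontal cocycle, so the cocycle claim will hinge entirely on showing $d^h\tilde\Phi = 0$.

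First I would prove the cocycle statement, $d^h\tilde\Phi_\chi = 0$ in $C^2(H,C^1(E,E_0))$. By linearity, $d^h\tilde\Phi_\chi = d^h\tilde\Phi + d^h(d^h\bar\chi)$, and the second term vanishes because $d^h\circ d^h = 0$. For the first term I would unfold the explicit formula
\[
d^h\tilde\Phi(g_1,g_2,a) = \tilde\Phi(g_1,g_2a) - \tilde\Phi(g_1g_2,a) + \tilde\Phi(g_2,a),
\]
and show it vanishes by invoking the multiplicativity of the $G$-action, $g(AB)=g(A)g(B)$, together with the defining relation $g(T_a)=\omega^{\tilde\Phi_g(a)}T_{ga}$ in Eq.~(\ref{action}). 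Concretely, applying $g_1g_2$ to $T_a$ in two ways — directly, and as $g_1$ followed by $g_2$ — and comparing the accumulated phases yields exactly the two-cocycle identity for $\tilde\Phi$; this is the standard statement that the phase function of a projective-type action is a group cocycle. I expect this to follow from the same associativity bookkeeping already used in \cite{Coho} for the deterministic case, since restricting to the subgroup $H$ and passing to the relative complex $C^1(E,E_0)$ does not disturb the phase identity.

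Next I would prove the mixed relation Eq.~(\ref{PhiBeta}), $d^v\tilde\Phi_\chi = d^h\beta_\chi$, which is the genuinely new compatibility statement linking the two coboundaries. Substituting the definitions on both sides, the left side becomes $d^v\tilde\Phi + d^v d^h\bar\chi$ and the right side becomes $d^h\beta + d^h d^v\bar\chi$ (using $\beta_\chi=\beta+d\bar\chi$ with $d$ the vertical coboundary $d^v$ on $C^*(E,E_0)$). Because $d^h$ and $d^v$ commute, the terms $d^v d^h\bar\chi$ and $d^h d^v\bar\chi$ cancel against each other, so it remains to show $d^v\tilde\Phi = d^h\beta$ for the state-independent data. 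I would verify this by evaluating $d^v\tilde\Phi(g,f) = \tilde\Phi(g,\partial_R f)$ on a face $f=(a,b)$ with $\partial_R f = [b]-[a+b]+[a]$, and comparing with $d^h\beta(g,f)=\beta(gf)-\beta(f)$; both sides should reduce to the phase discrepancy incurred when the action $g$ is commuted past the operator product $T_{a+b}=\omega^{\beta(a,b)}T_aT_b$, i.e. applying $g$ to Eq.~(\ref{prod_T}) and matching phases via Eq.~(\ref{action}).

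The main obstacle I anticipate is the bookkeeping in Eq.~(\ref{PhiBeta}): keeping straight which phase corrections live in which direction, and confirming that the relative boundary $\partial_R$ (as opposed to the full $\partial$) does not introduce stray terms — though the argument in Lemma~\ref{chiCocyc} already shows $\beta_\chi$ annihilates the difference $\partial f - \partial_R f$, which should let me replace $\partial_R$ by $\partial$ wherever convenient. The cleanest route is likely to prove the identity $d^v\tilde\Phi = d^h\beta$ at the level of the full complex first (where it is just the $g$-equivariance of the relation $\beta=-d\ss$ and its associated cocycle, essentially the content of Eq.~(\ref{s_update}) differentiated), and only afterward descend to the relative complex, using that all maps are induced from the absolute ones. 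I would want to double-check that the sign conventions in $\beta_\chi = \beta + d\bar\chi$ (note the $+$, versus the $-d\ss$ of Lemma~\ref{BetaProp}) are handled consistently so the cancellation is exact rather than off by a sign.
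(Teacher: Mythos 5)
Your overall strategy coincides with the paper's: split $\tilde\Phi_\chi = \tilde\Phi + d^h\bar\chi$, kill the correction terms using $d^hd^h=0$ and $d^hd^v=d^vd^h$, and reduce both claims to the state-independent identities $d^h\tilde\Phi=0$ and $d^v\tilde\Phi=d^h\beta$. The paper simply cites these two identities from Lemma~3 of \cite{Coho} (Eqs.~(31a) and (31b) there), whereas you propose to re-derive them from Eqs.~(\ref{action}) and (\ref{prod_T}); your sketches of those derivations (comparing phases in $(g_1g_2)(T_a)=g_1(g_2(T_a))$, and applying $g$ to $T_{a+b}=\omega^{\beta(a,b)}T_aT_b$) are correct and would go through.

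There is, however, a genuine gap: you never establish that $\tilde\Phi_\chi$ is a cochain in the \emph{relative} double complex $C^1(H,C^1(E,E_0))$, which is part of the lemma's statement, and correspondingly you never use the hypothesis $h\cdot\chi=\chi$ for all $h\in H$, Eq.~(\ref{H_chi}) --- a hypothesis that must enter somewhere, since it is the entire reason for restricting from $G$ to the subgroup $H$. The needed fact is that $\tilde\Phi_\chi(h,-)$ vanishes on $C_1(E_0)$: for $a\in E_0$ and $h\in H$ (using that $H$ preserves $E_0$, so $\bar\chi$ agrees with $\chi$ on the relevant arguments),
$$
\tilde\Phi_\chi(h,a)=\tilde\Phi_h(a)+\chi(ha)-\chi(a)=h\cdot\chi(a)-\chi(a)=0,
$$
by Eq.~(\ref{s_update}) and Eq.~(\ref{H_chi}). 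Without this, $\tilde\Phi_\chi$ is only a cochain valued in $C^1(E)$ and the relative statement is not even well posed. Moreover, this vanishing is exactly what your own ``prove it in the full complex, then descend'' step requires: in the relative complex the vertical coboundary is $d^v\tilde\Phi_\chi(g,f)=\tilde\Phi_\chi(g,\partial_R f)$, and replacing $\partial_R f$ by $\partial f$ (which is what your absolute-complex computation produces) is legitimate precisely because $\tilde\Phi_\chi(g,-)$ annihilates $\partial f-\partial_R f\in C_1(E_0)$. The property of $\beta_\chi$ established in Lemma~\ref{chiCocyc}, which you invoke for this purpose, concerns $\beta_\chi$ only and does not supply the analogous statement for $\tilde\Phi_\chi$. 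Once you add this one computation, your proof is complete and matches the paper's, modulo re-proving rather than citing the two identities from \cite{Coho}.
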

\begin{proof}
 Eq.~(\ref{H_chi}) and (\ref{phi_rel}) imply that 
 \begin{equation}\label{PhiChiInvar}
 \tilde{\Phi}_\chi(g,a)=\tilde{\Phi}_g(a) + \chi(ga)-\chi(a)=0,\; \forall a\in E_0,\, \forall g\in H.
 \end{equation}
 That is the function $\tilde{\Phi}_\chi(g,-)$ vanishes on $C^1(E_0)$, hence  belongs to $C^1(E,E_0)$ by definition of the relative complex. Therefore $\tilde{\Phi}_\chi$ is a cochain in $C^1(H,C^1(E,E_0))$. 
For the cocycle property we check that the group cohomology coboundary $d^h$ vanishes:
$$
d^h\tilde{\Phi}_\chi = d^h\tilde{\Phi} + d^hd^h\bar\chi =0
$$ 
where we used $d^h\tilde{\Phi}=0$ (Lemma 3 Eq.~(31a) in \cite{Coho}) and $d^hd^h=0$. For the second property we calculate
$$
d^v\tilde{\Phi}_\chi = d^v\tilde{\Phi} + d^vd^h\bar\chi = d^h\beta+d^vd^h\bar\chi=d^h(\beta+d^v\bar\chi)=d^h\beta_\chi
$$
using  $d^v\tilde{\Phi}=d^h\beta$ (Lemma 3 Eq.~(31b) in \cite{Coho}) and $d^hd^v=d^vd^h$. 
\end{proof}

Next we reduce our symmetry group. Let $N\subset H$ denote the normal subgroup of symmetry elements which fix each element of $E$. The quotient group $Q=H/N$ is the essential part of the symmetry which acts on the complex. 
 Let $\pi:H\rightarrow Q$ denote the quotient homomorphism.
Furthermore, we need to restrict to boundaries in the relative complex.
Let $B_1\subset C_1(E,E_0)$ denote the image of $C_2(E,E_0)$ under the relative boundary operator. Let $U_0$ denote the dual of $B_1$ in the sense that it consists of $\Z_d$--linear maps $B_1\rightarrow \Z_d$. We have a surjective map $C^1(E,E_0) \rightarrow U_0$. 
 We define $\Phi_\chi$ to be the composition 
 $$
\Phi_\chi: Q\stackrel{\theta}{\longrightarrow} H \stackrel{\tilde\Phi_\chi}{\longrightarrow}  C^1(E,E_0) \rightarrow U_0
 $$
where $\theta$ is a section $Q\rightarrow H$ of the quotient map. 
 Unravelling the definition we have
$$
\Phi_\chi(q, \partial_R f) = \tilde{\Phi}_\chi(\theta(q),\partial_Rf) = d^v \tilde{\Phi}_\chi(\theta(q),f)
$$
where $q\in Q$ and $f\in C_2(E,E_0)$. The quotient map $\pi:H\rightarrow Q$ induces a map of cohomology groups
$$
\pi^*: H^1(Q, U_0) \rightarrow H^1(H,U_0)
$$
and   $[\Phi_\chi]$ maps to the class of $d^v\tilde{\Phi}_\chi$ under this map. Using Lemma \ref{chiCocyc} and \ref{phi_property} we summarize the relation between $\beta_\chi$, $\tilde{\Phi}_\chi$, and $\Phi_\chi$ as follows
%

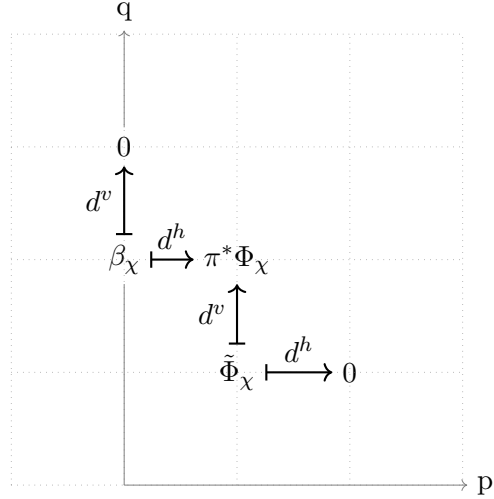
\begin{figure}
\centering
\begin{tikzpicture}[scale=1.5]
\node (beta) at (0,2) {$\beta_\chi$};
\node (dvbeta) at (0,3) {$0$};
\node (dhPhi) at (2,1) {$0$};
\node (piPhi) at (1,2) {$\pi^* \Phi_\chi$};
\node (Phi) at (1,1) {$\tilde{\Phi}_\chi$};
\node (p) at (3.2,0) {p};
\node (q) at (0,4.2) {q};
\draw [help lines,<-]   (p) -- (0,0)  ;
\draw [help lines]    (0,0)-- (beta) ;
\draw [help lines, ->] (dvbeta) edge (q);
   \draw [help lines, dotted] (-1,0) grid (3,4);
   \draw[|->,thick]
  (beta) edge node[left] {$d^v$}  (dvbeta);
  \draw[|->,thick]
  (beta) edge node[above]  {$d^h$}  (piPhi);
  \draw[|->,thick]
  (Phi) edge node[left]  {$d^v$}  (piPhi);
  \draw[|->,thick]
  (Phi) edge node[above]  {$d^h$}  (dhPhi);
  \end{tikzpicture}
  \caption{The complex $C^p(H,C^q(E,E_0))$ has two types of coboundaries: horizontal $d^h$, and vertical $d^v$. The cochains $\beta_\chi$ and $\tilde\Phi_\chi$ live in degrees $(p,q)=(0,2)$ and $(1,1)$, respectively.
  }
\end{figure}

\begin{Theorem}\label{Thm_coho}
Given $(E,E_0,\chi)$ and a symmetry group $H$ satisfying $h\cdot\chi=\chi$ for all $h\in H$ if the class $[\Phi_\chi]\not=0$ in $H^1(Q,U_0)$ then $[\beta_\chi]\not=0$ in $H^2(\cC(E,E_0))$. 
\end{Theorem}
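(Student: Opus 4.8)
The plan is to prove the contrapositive: assuming $[\beta_\chi]=0$ in $H^2(\cC(E,E_0))$, I will show $[\Phi_\chi]=0$ in $H^1(Q,U_0)$. Reading $H^2(\cC(E,E_0))$ as the vertical cohomology of the double complex $C^p(H,C^q(E,E_0))$ at bidegree $(0,2)$, and recalling that $\beta_\chi$ is a $d^v$-cocycle by Lemma~\ref{chiCocyc}, triviality of $[\beta_\chi]$ means there is a relative $1$-cochain $\sigma\in C^1(E,E_0)=C^0(H,C^1(E,E_0))$ with $\beta_\chi=d^v\sigma$. This is the only global input I need; everything else is diagram chasing, relying on the commuting differentials $d^hd^v=d^vd^h$ of the double complex.

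The core computation uses the two lemmas already in hand. By Lemma~\ref{phi_property} we have $d^v\tilde{\Phi}_\chi=d^h\beta_\chi$; substituting $\beta_\chi=d^v\sigma$ and commuting the differentials gives
$$
d^v\tilde{\Phi}_\chi = d^hd^v\sigma = d^vd^h\sigma .
$$
Evaluating this identity on a face $f\in C_2(E,E_0)$ and an element $g\in H$, and unwinding $d^v$ (which applies $\partial_R$), yields
$$
\tilde{\Phi}_\chi(g,\partial_R f) = d^h\sigma(g,\partial_R f) = (g\cdot\sigma-\sigma)(\partial_R f).
$$

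Next I transfer this equality to $U_0$. Since every element of $B_1\subset C_1(E,E_0)$ has the form $\partial_R f$, the displayed relation is an equality of elements of $U_0$ after applying the restriction map $r\colon C^1(E,E_0)\to U_0$. Here I use that $\partial_R$, hence $B_1$, is $H$-stable, so $r$ is $H$-equivariant and $r(g\cdot\sigma)=g\cdot r(\sigma)$. Writing $u:=r(\sigma)\in U_0$ the relation becomes $r(\tilde{\Phi}_\chi(g,-))=g\cdot u-u$ for all $g\in H$. Specializing to $g=\theta(q)$ and using that the $H$-action on $U_0$ factors through $Q=H/N$ (because $N$ fixes every element of $E$, hence acts trivially on $B_1$ and $U_0$), I obtain $\Phi_\chi(q)=r(\tilde{\Phi}_\chi(\theta(q),-))=q\cdot u-u=(d^h u)(q)$. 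Thus $\Phi_\chi$ is the group-cohomology coboundary of the $0$-cochain $u$, so $[\Phi_\chi]=0$, which is the desired contrapositive.

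I expect the main obstacle to be bookkeeping rather than anything conceptual: correctly identifying $d^v\tilde{\Phi}_\chi$ with the $U_0$-valued cochain representing $\pi^*\Phi_\chi$, and verifying both the equivariance $r(g\cdot\sigma)=g\cdot r(\sigma)$ and the descent of the $U_0$-action to $Q$. One should also keep track of the $\Z_d$ sign conventions relating $\beta_\chi=d^v\sigma$ to the $\beta=-d\ss$ of Theorem~\ref{Thm_beta}, although signs never affect whether a cohomology class vanishes. A cleaner but equivalent route for the last step is to note $\pi^*[\Phi_\chi]=[d^v\tilde{\Phi}_\chi]=0$ in $H^1(H,U_0)$ and invoke injectivity of the inflation map $\pi^*$ (valid since $N$ acts trivially on $U_0$); exhibiting $u$ explicitly, however, avoids even that appeal.
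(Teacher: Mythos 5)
Your proof is correct and takes essentially the same route as the paper's: both argue the contrapositive, write $\beta_\chi$ as a vertical coboundary $\pm d^v s$, invoke Lemma~\ref{phi_property} ($d^v\tilde{\Phi}_\chi=d^h\beta_\chi$) together with $d^hd^v=d^vd^h$, and use that $N$ fixes every element of $E$ to descend from $H$ to $Q$, exhibiting $\Phi_\chi$ on $B_1$ as a group-cohomology coboundary. The only differences are presentational: your explicit equivariant restriction $r$ and $0$-cochain $u$ are implicit in the paper's single chain of equalities ending in $\Phi_\chi(q,\partial_R f)=-d^hs(q,\partial_R f)$, and the sign convention is, as you note, immaterial.
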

\begin{proof}
We will show that $[\beta_\chi]=0$ implies $[\Phi_\chi]=0$. Assume that $\beta_\chi = - d^vs$ for some $s\in C^1(E,E_0)$. For $q\in Q$ and $f\in C_2(E,E_0)$ we have
$$
\Phi_\chi(q,\partial_Rf)=\tilde\Phi_\chi(\theta(q),\partial_Rf)=d^v\tilde\Phi_\chi(\theta(q),f)=d^h\beta_\chi(\theta(q),f)=
-d^hd^vs(\theta(q),f)=s(\partial_Rf)-s(q\partial_Rf)
$$
where we used   Eq.~(\ref{PhiBeta}) in Lemma \ref{phi_property} and $\theta(q)\partial_Rf=q\theta_Rf$ since the normal subgroup $N$ fixes each element of $E$. Thus, $[\Phi_\chi]=0$   since $\Phi_\chi(q,\partial_Rf)=-d^hs(q,\partial_Rf)$.
\end{proof}

This result is the basis for the extension of the ideas used in \cite{Coho}. A special case is the state-dependent symmetry based contextuality proofs. There $\chi$ arises as $\ss_\Psi$ associated to the eigenvalues of the state. Note that taking $E_0=\emptyset$ specializes to the state-independent case $\cC(E,\emptyset)=\cC(E)$. In this paper we will introduce a probabilistic version which generalizes the deterministic scenario of state-dependent contextuality.

\section{Cohomological proofs of contextuality based on parity}\label{CCP}

We now have the tools at hand to construct cohomological proofs of contextuality for probabilistic scenarios. In this section, we provide proofs of this kind that are based on parity arguments, such as Mermin's inequality (\ref{MeIn}).
 
 We begin with the contextuality witnesses. For a subset $E_0\subset E$ and a function $\chi:E_0\rightarrow \Z_d$  we define the operator
\begin{equation}\label{P}
P_{\chi} = \frac{1}{|E_0|} \sum_{a\in E_0} P_{a,\chi(a)}
\end{equation}
where $P_{a,\chi(a)}$  denotes the projector onto the eigenspace of $T_a$ associated to the eigenvalue $\omega^{\chi(a)}$.  Explicitly, the projector has the form
$$
P_{a,\chi(a)} = \frac{1}{d} \sum_{k\in \Z_d} \omega^{-k\chi(a)} T^k_{a}.
$$
We define a  probability function 
\begin{equation}\label{DEFp}
p_{\chi}(\rho) =     \Span{P_{\chi}}_\rho
\end{equation}   
as the expectation value of $P_{\chi}$ with respect to the state $\rho$. Note that $p_\chi$ is a probability. By Eq.~(\ref{DEFp}), $0\leq p_\chi(\rho) \leq 1$, for all density operators $\rho$.

Depending on the function $\chi$, ncHVMs impose non-trivial bounds on the probabilities $p_{\chi}(\rho)$. To state these bounds and describe their cohomological properties, it is useful to introduce the notion of ``$\beta$-compatible cochains''.

\begin{Def}\label{def_beta_comp}A $\beta$-compatible cochain is a 1-cochain $\ss \in C^1(E)$ that satisfies Eq.~(\ref{beta_ds}).
\end{Def}
Thus, every ncHVM value assignment is a $\beta$-compatible cochain. The reverse is not necessarily true. While every ncHVM value assignment has to respect the constraint Eq.~(\ref{beta_ds}), it is conceivable that there are independent additional constraints on those assignments.

We denote the set of $\beta$-compatible cochains by $\overline{\Lambda}$,
\begin{equation}
\overline{\Lambda}:=\{\mathfrak{s} \in C^1(E)|\, d\mathfrak{s} = -\beta\}.
\end{equation}
With Definition~\ref{def_nc_hvm} and the above observation, we have the relation
\begin{equation}\label{LLbar}
\Lambda \subseteq \overline{\Lambda}.
\end{equation}
Another ingredient in the bounds stated below is the Hamming distance, which measures the degree of similarity between two functions. Given two functions $f,g:E_0\rightarrow \Z_d$, the Hamming distance is defined as
$$
\H(f,g) = |E_0| - \sum_{a\in E_0} \delta_{f(a),g(a)}.
$$ 
Further, let $\H(\chi,\overline{\Lambda})$ denote the minimum of $\H(\chi,\mathfrak{s}|_{E_0})$ as $\mathfrak{s}\in\overline{\Lambda}$ is varied,
$$
\H(\chi,\overline{\Lambda}):= \min_{\mathfrak{s}\in {\overline{\Lambda}}} \H(\chi,\mathfrak{s}|_{E_0}).
$$ 
Given a function $\chi:E_0\rightarrow \Z_d$, which is automatically a $\beta$-compatible cochain since $\cC_*(E_0)$ is one dimensional, we can define $\beta_\chi$ as in Eq.~(\ref{beta_rel}). It is a cocycle in the relative complex $C^2(E,E_0)$.

\begin{Theorem}\label{T_prob_beta} 
A scenario $(\oO,\rho)$ is contextual  if 
\begin{equation}\label{ncie}
p_{\chi}(\rho) > 1 -\frac{\H(\chi, \overline{\Lambda})}{|E_0|}.
\end{equation}
\end{Theorem}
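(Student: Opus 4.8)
The plan is to prove the contrapositive: assuming the scenario $(\oO,\rho)$ is non-contextual, I will show that $p_{\chi}(\rho) \le 1 - \H(\chi,\overline{\Lambda})/|E_0|$. So fix an ncHVM $(S,q_\rho,\Lambda)$ as in Definition~\ref{def_nc_hvm}. The first step is to attach to each internal state $\nu$ a value assignment $\ss_\nu:E\to\Z_d$ defined by $\lambda_\nu(T_a)=\omega^{\ss_\nu(a)}$; this is well defined because condition~(\ref{eq:qconsis}) forces $\lambda_\nu(T_a)$ to be an eigenvalue of $T_a$, hence one of the roots $\omega^k$. Using the multiplicativity constraint Eq.~(\ref{LambdaConstr}) together with the defining relation Eq.~(\ref{prod_T}) for $\beta$, one checks that $\ss_\nu(a+b)=\ss_\nu(a)+\ss_\nu(b)+\beta(a,b)$, i.e.\ $d\ss_\nu=-\beta$, so that each $\ss_\nu$ is a $\beta$-compatible cochain, $\ss_\nu\in\overline{\Lambda}$.

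The second step evaluates the value assignment on the projectors. Writing $P_{a,\chi(a)}=\tfrac1d\sum_{k\in\Z_d}\omega^{-k\chi(a)}T_a^k$ and using $\lambda_\nu(T_a^k)=\lambda_\nu(T_a)^k=\omega^{k\ss_\nu(a)}$ (repeated application of multiplicativity for the self-commuting family $\{T_a^k\}$), the sum collapses by the orthogonality relation $\tfrac1d\sum_k\omega^{km}=\delta_{m,0}$, giving $\lambda_\nu(P_{a,\chi(a)})=\delta_{\ss_\nu(a),\chi(a)}$. Averaging over $a\in E_0$ and recognizing the Hamming distance yields $\tfrac1{|E_0|}\sum_{a\in E_0}\delta_{\ss_\nu(a),\chi(a)}=1-\H(\chi,\ss_\nu|_{E_0})/|E_0|$.

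The third step assembles the bound. Expanding $p_\chi(\rho)=\langle P_\chi\rangle_\rho$ by linearity and condition~(\ref{eq:trq}), I obtain $p_\chi(\rho)=\sum_\nu q_\rho(\nu)\,\lambda_\nu(P_\chi)=\sum_\nu q_\rho(\nu)\bigl(1-\H(\chi,\ss_\nu|_{E_0})/|E_0|\bigr)$. Since each $\ss_\nu\in\overline{\Lambda}$, the definition of $\H(\chi,\overline{\Lambda})$ as a minimum over $\overline{\Lambda}$ gives $\H(\chi,\ss_\nu|_{E_0})\ge\H(\chi,\overline{\Lambda})$ for every $\nu$; using $\sum_\nu q_\rho(\nu)=1$, this produces $p_\chi(\rho)\le 1-\H(\chi,\overline{\Lambda})/|E_0|$, which is exactly the contrapositive of the claim.

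The step I expect to require the most care is the second one, namely the passage from condition~(\ref{eq:trq}), which is stated only for observables $A\in\oO$, to the projector $P_{a,\chi(a)}$, which is a $\Z_d$-linear combination of the powers $T_a^k$. I would justify this by extending $\lambda_\nu$ linearly and invoking linearity of the trace, after confirming that the powers $T_a^k$ lie in $\oO$ and that $\lambda_\nu$ is multiplicative on powers of the single observable $T_a$, so that $\lambda_\nu(T_a^k)=\omega^{k\ss_\nu(a)}$ is unambiguous. Notably, the only inequality in the entire argument is the replacement of $\H(\chi,\ss_\nu|_{E_0})$ by its minimum $\H(\chi,\overline{\Lambda})$ over the possibly larger set $\overline{\Lambda}\supseteq\Lambda$ (cf.\ Eq.~(\ref{LLbar})); everything else is an equality, which simultaneously shows that this is the tightest bound obtainable from the cohomological data $\overline{\Lambda}$.
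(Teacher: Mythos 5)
Your proof is correct and follows essentially the same route as the paper's: both express the ncHVM prediction for $p_\chi$ as a convex combination over hidden states of agreement fractions $\tfrac{1}{|E_0|}\sum_{a\in E_0}\delta_{\chi(a),\ss(a)}$, pass from $\Lambda$ to $\overline{\Lambda}$ via Eq.~(\ref{LLbar}), and invoke the definition of $\H(\chi,\overline{\Lambda})$ as a minimum. The only difference is that you make explicit two steps the paper treats as prior observations or leaves implicit---the verification that each $\lambda_\nu$ induces a cochain $\ss_\nu\in\overline{\Lambda}$, and the character-orthogonality computation $\lambda_\nu(P_{a,\chi(a)})=\delta_{\ss_\nu(a),\chi(a)}$---which is a sound elaboration, not a different argument.
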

\begin{proof}
Assume as given a ncHVM with value assignments $\Lambda$ and a probability distribution $q$. The ncHVM expression $p_{\chi}(q)$ for the quantity $p_{\chi}(\rho)$ satisfies
$$
\begin{array}{rcl}
p_{\chi}(q) &=& \displaystyle{\frac{1}{|E_0|} \sum_{\ss \in \Lambda,a \in E_0} q(\ss) \delta_{\chi(a),\ss(a)}} \\
&\leq& \displaystyle{\frac{1}{|E_0|} \max_{\ss\in \Lambda} \sum_{a\in E_0}  \delta_{\chi(a),\ss(a)}} \\
&\leq& \displaystyle{ \frac{1}{|E_0|} \max_{\ss\in \overline{\Lambda}} \sum_{a\in E_0} \delta_{\chi(a),\ss(a)}} \\
&=& \displaystyle{\frac{1}{|E_0|} (|E_0|- \H(\chi,\overline{\Lambda})).}
\end{array}
$$  
Therefore, if $p_{\chi}(\rho)$ is larger than $1 - \H(\chi, \overline{\Lambda})/|E_0|$ then no ncHVM can describe the given scenario $(\oO,\rho)$.
\end{proof}

Theorem~\ref{T_prob_beta} has the following implication.
\begin{Cor}\label{C_prob_beta}
A scenario $(\oO,\rho)$ is contextual  if $[\beta_\chi]\neq 0$ and
$$ 
p_{\chi}(\rho) > 1 -\frac{1}{|E_0|}.
$$
\end{Cor}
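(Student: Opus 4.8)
The plan is to derive the corollary directly from Theorem~\ref{T_prob_beta}, reducing it to a single lower bound on the Hamming distance. Concretely, I would show that the hypothesis $[\beta_\chi]\neq 0$ forces $\H(\chi,\overline{\Lambda})\geq 1$. Once that lower bound is established, the right-hand side of the non-contextuality inequality~(\ref{ncie}) obeys
$$
1-\frac{\H(\chi,\overline{\Lambda})}{|E_0|} \leq 1-\frac{1}{|E_0|},
$$
so the corollary's assumption $p_{\chi}(\rho) > 1-1/|E_0|$ immediately gives $p_{\chi}(\rho) > 1-\H(\chi,\overline{\Lambda})/|E_0|$, and Theorem~\ref{T_prob_beta} delivers contextuality. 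Thus the entire content of the corollary is the implication $[\beta_\chi]\neq 0 \Rightarrow \H(\chi,\overline{\Lambda})\geq 1$.

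I would prove this implication by contraposition. Suppose instead that $\H(\chi,\overline{\Lambda})=0$. Since the Hamming distance takes values in $\{0,1,\dots,|E_0|\}$, the minimum defining $\H(\chi,\overline{\Lambda})$ is attained over the (nonempty) set $\overline{\Lambda}$, so there exists a $\beta$-compatible cochain $\ss\in\overline{\Lambda}$ with $\H(\chi,\ss|_{E_0})=0$, i.e. $\ss|_{E_0}=\chi$. By the definition of $\overline{\Lambda}$ this $\ss\in C^1(E)$ satisfies $d\ss=-\beta$, so it is exactly a value assignment on $E$ extending $\chi$ in the sense of Theorem~\ref{Thm_beta}. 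That theorem then yields $[\beta_\chi]=0$ in $H^2(E,E_0)$. Taking the contrapositive gives $[\beta_\chi]\neq 0 \Rightarrow \H(\chi,\overline{\Lambda})\geq 1$, as required.

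The only step requiring care — and the one I expect to be the main (though minor) obstacle — is the bookkeeping that identifies a zero-Hamming-distance element of $\overline{\Lambda}$ with a value assignment extending $\chi$. This is a matter of checking that the two notions of ``value assignment'' coincide: membership in $\overline{\Lambda}$ encodes precisely the $\beta$-compatibility condition $d\ss=-\beta$ invoked in Theorem~\ref{Thm_beta}, while $\H(\chi,\ss|_{E_0})=0$ encodes the boundary condition $\ss|_{E_0}=\chi$ (using that $\H(f,g)=0$ iff $f=g$, immediate from the definition of $\H$). There is no analytic or combinatorial difficulty beyond this translation; everything else is the elementary inequality chain above together with a direct appeal to Theorem~\ref{Thm_beta}.
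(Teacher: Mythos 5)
Your proposal is correct and takes essentially the same route as the paper: both reduce the corollary to the implication $[\beta_\chi]\neq 0 \Rightarrow \H(\chi,\overline{\Lambda})\geq 1$ obtained from Theorem~\ref{Thm_beta}, and then conclude with Theorem~\ref{T_prob_beta}. Your contrapositive phrasing (a distance-zero element of $\overline{\Lambda}$ is precisely a value assignment extending $\chi$, forcing $[\beta_\chi]=0$) is just the paper's argument run in reverse, with the identification of $\overline{\Lambda}$-membership and the hypothesis of Theorem~\ref{Thm_beta} made explicit.
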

\begin{proof}
If $[\beta_\chi] \neq 0$, then Theorem~\ref{Thm_beta} says that no value assignment $\mathfrak{s}: E \longrightarrow \mathbb{Z}_d$ exists such that $\mathfrak{s}|_{E_0} = \chi$. Therefore, $\H(\chi, \overline{\Lambda}) \geq 1$. Now combining this with Theorem~\ref{T_prob_beta}, the scenario $(\oO,\rho)$ is contextual if $p_{\chi}(\rho) > 1 - 1/|E_0|$.
\end{proof}
Corollary~\ref{C_prob_beta} generally produces weaker contextuality thresholds than Theorem~\ref{T_prob_beta}. We  state it nonetheless, for two reasons: (i) It is the direct probabilistic generalization of Theorem~2 in \cite{Coho}. (ii) Through the condition $[\beta_\chi]\neq 0$ it is evident that also in probabilistic settings contextuality has a topological aspect.

The latter is not a priori clear for Theorem~\ref{T_prob_beta}, and Corollary~\ref{C_prob_beta} thus prompts the question ``Is the Hamming distance $\H(\chi, \overline{\Lambda})$ a cohomological invariant?''---This turns out to be the case.
\begin{Theorem}\label{Hamming}
The Hamming distance $\H(\chi, \overline{\Lambda})$ is a cohomological invariant, $\H(\chi,\overline{\Lambda})=\H(\chi',\overline{\Lambda})$ if $[\beta_\chi]=[\beta_{\chi'}]$.
\end{Theorem}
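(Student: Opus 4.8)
The plan is to reduce the cohomological hypothesis $[\beta_\chi]=[\beta_{\chi'}]$ to a concrete statement about $1$-cocycles, and then to exhibit an explicit Hamming-distance-preserving bijection of the set $\overline{\Lambda}$ onto itself.

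First I would unwind the hypothesis. Since $\bar\chi-\bar\chi'=\overline{\chi-\chi'}$ (both extend by zero off $E_0$ and agree with $\chi-\chi'$ on $E_0$), Eq.~(\ref{beta_rel}) gives $\beta_\chi-\beta_{\chi'}=d(\bar\chi-\bar\chi')=d\,\overline{\chi-\chi'}$. In the relative complex the class equality $[\beta_\chi]=[\beta_{\chi'}]$ in $H^2(E,E_0)$ means $\beta_\chi-\beta_{\chi'}=du$ for some $u\in C^1(E,E_0)$, that is, some $u\in C^1(E)$ with $u|_{E_0}=0$. Setting $w:=\overline{\chi-\chi'}-u$, this is equivalent to $dw=0$ together with $w|_{E_0}=\chi-\chi'$. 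Thus the hypothesis is exactly the existence of a $1$-cocycle $w$ in the absolute complex $\cC(E)$ (an element of $\ker d\subset C^1(E)$) whose restriction to $E_0$ realizes the difference $\chi-\chi'$.

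Next I would use that $\overline{\Lambda}=\{\ss\in C^1(E)\mid d\ss=-\beta\}$ is a coset of the cocycle group $\ker d\subset C^1(E)$. Hence the map $\ss\mapsto \ss-w$ sends $\overline{\Lambda}$ bijectively onto $\overline{\Lambda}$, since $d(\ss-w)=d\ss-dw=-\beta$. The key computation is that this bijection preserves agreement with the respective partial assignments: for $a\in E_0$ one has $(\ss-w)(a)=\ss(a)-(\chi(a)-\chi'(a))$, so $\chi'(a)=(\ss-w)(a)$ holds if and only if $\chi(a)=\ss(a)$. Consequently $\delta_{\chi'(a),(\ss-w)(a)}=\delta_{\chi(a),\ss(a)}$ for every $a\in E_0$, whence $\H(\chi',(\ss-w)|_{E_0})=\H(\chi,\ss|_{E_0})$.

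Finally I would pass to minima. As $\ss$ ranges over $\overline{\Lambda}$, so does $\ss-w$, so minimizing the left-hand side over $\overline{\Lambda}$ produces $\H(\chi',\overline{\Lambda})$ while the right-hand side produces $\H(\chi,\overline{\Lambda})$, and the two agree. I expect the only delicate step to be the first one, namely correctly translating equality of relative cohomology classes into the existence of the cocycle $w$ with the prescribed restriction, in particular keeping track that the coboundary freedom $u$ must vanish on $E_0$ so that $w|_{E_0}=\chi-\chi'$ is forced. The remaining ingredients---the affine-coset observation for $\overline{\Lambda}$ and the elementary Kronecker-delta identity---are routine, and the same bijection also shows that $\overline{\Lambda}$ realizes agreement patterns for $\chi$ and $\chi'$ identically, so the two minima are defined on the same footing.
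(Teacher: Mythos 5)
Your proposal is correct and is essentially the paper's own argument in different packaging: your cocycle $w=\bar\chi-\bar\chi'+s$ (with $s$ the relative coboundary witness) makes your translation $\ss\mapsto\ss-w$ exactly the paper's map $\ss\mapsto\ss-s-\bar\chi+\bar\chi'$, and your Kronecker-delta identity replaces the paper's chain of translation-invariance steps for the Hamming distance. Both proofs rest on the same two facts---the coboundary freedom vanishes on $E_0$, and $\overline{\Lambda}$ is preserved by the resulting shift---so no further comparison is needed.
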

\begin{proof}
Assume that $\chi'$ is another value assignment on $E_0$ such that $\beta_\chi$ and $\beta_{\chi'}$ are in the same cohomology class i.e. $\beta_{\chi'}=\beta_{\chi}+ds$ for some $s\in C^1(E,E_0)$. Note that since $s$ lives in the relative complex it vanishes on $E_0$. Using the definition for $\beta_{\chi}$ and $\beta_{\chi'}$ we obtain  
\begin{equation}\label{dC}
d(s+\overline{\chi}-\overline{\chi}')=0.
\end{equation} 
Now assume a $\beta$-compatible cochain $\ss\in \overline{\Lambda}$, i.e., it holds that $d\ss=-\beta$. Now subtracting Eq.~(\ref{dC}) from the last relation, we find that
$d(\ss- s-\overline{\chi}+\overline{\chi}') = -\beta$. Hence, $\ss- s-\overline{\chi}+\overline{\chi}'$ also is a $\beta$-consistent cochain. 
By Definition~\ref{def_beta_comp} we have
\begin{equation}\label{eqLambda}
\{ \ss-s-\overline{\chi}+\overline{\chi}',\, \ss \in \overline{\Lambda}\} =\overline{\Lambda}.
\end{equation}  Then we can write
$$
\begin{array}{rcl}
\H(\chi,\overline{\Lambda}) &=& \min_{\ss \in \overline{\Lambda}} \H(\chi,\ss|_{E_0})\\
&=& \min_{\ss \in \overline{\Lambda}} \H(\chi,(\ss -s)|_{E_0})\\
&=& \min_{\ss \in \overline{\Lambda}} \H(0,(\ss-s-\overline{\chi})|_{E_0})\\
&=& \min_{\ss \in \overline{\Lambda}} \H(\chi', (\ss-s-\overline{\chi}+\overline{\chi}')|_{E_0}) \\
&=& \min_{\ss \in \overline{\Lambda}} \H(\chi', \ss|_{E_0}) \\
&= &\H(\chi',\overline{\Lambda}).
\end{array}
$$
Therein, in the first step we used the fact that $s$ vanishes on $E_0$, and in the last step we used Eq.~(\ref{eqLambda}). We have shown that $\H(\chi, \overline{\Lambda})=\H(\chi', \overline{\Lambda})$ whenever $[\beta_\chi]=[\beta_{\chi'}]$ in $H^2(\cC(E,E_0))$.
\end{proof}

{\em{Example.}} 
 We return to Mermin's star, where we have
	$$
	E- E_0  = \{a_{X_i}, a_{Y_i}, \,i=1,..,3\},\; E_0 = \{a_{X_1Y_2Y_3}, a_{Y_1X_2Y_3}, a_{Y_1Y_2X_3}, a_{X_1X_2X_3}\}
	$$
and $\eta(E_0)\subset \eta(E)$ denote the corresponding set of observables.	
	We note that the GHZ state $|\text{GHZ}\rangle= (|000\rangle + |111\rangle)/\sqrt{2}$ is an eigenstate of all observables in $\eta(E_0)$, with eigenvalues $-1,-1,-1,1$, respectively. Correspondingly, we choose the function $\chi$ that appears in the definition of $\beta_\chi$ to be
	$$
	\chi(a_{XYY}) = \chi(a_{YXY}) = \chi(a_{YYX}) = 1,\;\;  \chi(a_{XXX}) =0. 
	$$
We now show that for this function $\chi$, both Theorem~\ref{T_prob_beta} and Corollary~\ref{C_prob_beta} reproduce the Mermin inequality (\ref{MeIn}) when applied to Mermin's star. First, regarding Theorem~\ref{T_prob_beta}, one of the closest functions to $\chi$ that is induced by a $\beta$-compatible cochain $\ss$ is $\ss|_{E_0}\equiv 1$, which comes from $\ss(a_{X_3})=\ss(a_{Y_3})=1$, $\ss(a_{X_1})=\ss(a_{Y_1})=\ss(a_{X_2})=\ss(a_{Y_2})=0$. Hence, $\H(\chi,\overline{\Lambda})=1$. Thus, Theorem~\ref{T_prob_beta} says that probabilistic state-dependent version of Mermin's star is contextual for all states $\rho$ with
	\begin{equation}\label{MI}
	\frac{1}{2} + \frac{\langle X_1X_2X_3\rangle_\rho -\langle X_1Y_2Y_3\rangle_\rho - \langle Y_1X_2Y_3\rangle_\rho - \langle Y_1Y_2X_3\rangle_\rho }{8} > \frac{3}{4}.
	\end{equation}
	This reproduces the familiar Mermin inequality \cite{Merm}; cf. Inequality (\ref{MeIn}). The GHZ state violates the non-con\-textuality inequality~(\ref{MI}) maximally. 
	
Regarding Corollary~\ref{C_prob_beta}, the relative complex ${\cal{C}}(E,E_0)$ and $\beta_\chi$ for this scenario is shown in Fig.~\ref{MermSq}c. For the surface $F'$ in the figure it holds that $\partial_R F'=0$ and  $\int_{F'}\beta_\chi = 1$; hence $[\beta_\chi]\neq 0$, and Corollary~\ref{C_prob_beta} can be applied. It produces the same inequality (\ref{MI}) as Theorem~\ref{T_prob_beta}.
\medskip

Returning to the general case, we observe that by using the notion of contextual fraction we can state Theorem~\ref{T_prob_beta} in a more general form.   With our quantum setting $(\rho,{\cal{O}})$ the emprical model $e$ comes from the state $\rho$.
The contextual fraction amounts to the decomposition of $e$ into a contextual portion $e^{C}$ and a non-contextual portion $e^{NC}$,
\begin{equation}\label{RhoSplit}
e = \CFs(\rho)\, e^{C} + \NCFs(\rho)\, e^{NC}.
\end{equation}

\begin{Theorem}\label{T_frac_beta}
Consider a scenario $(\rho,{\cal{O}})$ and a restricted value assignment $\chi: E_0 \longrightarrow \mathbb{Z}_d$. Then, the probability function $p=p_{\chi}(\rho)$ satisfies 
\begin{equation}\label{psBound01}
p \leq  1-  \frac{{\sf{NCF}(\rho)}\,\H(\chi,\overline{\Lambda})}{|E_0|}.
\end{equation}
\end{Theorem}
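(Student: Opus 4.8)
The plan is to exploit the fact that $p_\chi$ is an affine functional of the empirical model, combine this with the convex decomposition in Eq.~(\ref{RhoSplit}), and bound the non-contextual piece using the estimate already established inside the proof of Theorem~\ref{T_prob_beta}. The cohomological content then enters for free, since $\H(\chi,\overline{\Lambda})$ is a cohomological invariant by Theorem~\ref{Hamming}.

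First I would make precise that the operator-defined quantity $p_\chi$ is a linear functional of the empirical model $e$ associated to $\rho$. By Eq.~(\ref{P}) together with the explicit form of the projectors $P_{a,\chi(a)}$, the value $p_\chi=\Span{P_\chi}_\rho$ is the average over $a\in E_0$ of the marginals $\Span{P_{a,\chi(a)}}_\rho$, i.e.\ of the probabilities that a measurement of $T_a$ returns the outcome $\omega^{\chi(a)}$. Each of these marginals is a component of the empirical model, so $p_\chi$ depends linearly on $e$; write $p_\chi(e)$ for this dependence. Inserting the decomposition Eq.~(\ref{RhoSplit}) and using linearity gives
$$
p_\chi(e) = \CFs(\rho)\, p_\chi(e^{C}) + \NCFs(\rho)\, p_\chi(e^{NC}).
$$

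Next I would bound the two pieces separately. The contextual part is bounded trivially: since $e^{C}$ is a genuine no-signalling empirical model, $p_\chi(e^{C})$ is an average of probabilities, so $p_\chi(e^{C})\le 1$. The non-contextual part $e^{NC}$ is, by definition of $\NCFs$, described by an ncHVM, so the chain of inequalities in the proof of Theorem~\ref{T_prob_beta} applies verbatim and yields $p_\chi(e^{NC})\le 1-\H(\chi,\overline{\Lambda})/|E_0|$. Combining these with $\CFs(\rho)+\NCFs(\rho)=1$,
$$
p_\chi(e) \le \CFs(\rho) + \NCFs(\rho)\left(1-\frac{\H(\chi,\overline{\Lambda})}{|E_0|}\right) = 1-\frac{\NCFs(\rho)\,\H(\chi,\overline{\Lambda})}{|E_0|},
$$
which is exactly Eq.~(\ref{psBound01}).

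The only step requiring genuine care is the first one: establishing rigorously that the operator-defined $p_\chi(\rho)$ coincides with a linear functional on empirical models, so that the convex decomposition of $e$ transfers to a convex decomposition of $p_\chi$. This is the translation between the operator formalism used to define $P_\chi$ and the empirical-model (sheaf-theoretic) formalism in which $\CFs$ and $\NCFs$ are defined. Once that identification is in place, the remainder is an immediate convexity argument, and Theorem~\ref{T_prob_beta} is recovered as the special case $\NCFs(\rho)=1$.
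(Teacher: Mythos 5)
Your proposal is correct and follows essentially the same route as the paper's own proof: decompose $p_\chi$ via linearity across Eq.~(\ref{RhoSplit}), bound the contextual part trivially by $1$, and apply the bound from Theorem~\ref{T_prob_beta} to the non-contextual part. Your extra care in identifying $p_\chi$ as a linear functional on empirical models is a worthwhile elaboration of what the paper compresses into the phrase ``quantum mechanical expectation values are linear in the state $\rho$,'' but it is not a different argument.
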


\begin{proof}
Since quantum mechanical expectation values are linear in the state $\rho$, with Eq.~(\ref{RhoSplit}) we have 
$$  
p_{\chi}(\rho) = \CFs(\rho)\, p^{C} + \NCFs(\rho)\, p^{NC}.   
$$
Now using therein the trivial upper bound $p^{C}\leq 1$ for the contextual part, and the bound $p^{NC}\leq 1-\H(\chi,\overline{\Lambda})/|E_0|$ of Theorem~\ref{T_prob_beta} for the non-contextual part, we obtain Eq.~(\ref{psBound01}).
\end{proof}
Theorem~\ref{T_frac_beta} shows that the probability $p$ can get close to the maximal value of 1 only if the contextual fraction $\sf{CF}(\rho)$ is close to unity. More generally, the larger the contextual fraction, the larger the reachable value for $p$. To make this more explicit, we define the amount $\Delta_\chi$ of violation of the non-contextuality inequality (\ref{ncie})   as  
$$\Delta_\chi(\rho):=p_{\chi}(\rho)-  \left( 1-  \frac{\H(\chi,\overline{\Lambda})}{|E_0|} \right).$$ With Theorem~\ref{T_frac_beta} we find that
\begin{equation}\label{Mavio}
\Delta_\chi(\rho) \leq \frac{{\sf{CF}(\rho)}\,\H(\chi,\overline{\Lambda})}{|E_0|}.
\end{equation}
The amount $\Delta_\chi$ of violation of a non-contextuality inequality based on $\chi$ can only be large if the contextual fraction is large and the Hamming distance of $\chi$ to the closest function in $\overline{\Lambda}$ is large.   

The cohomological aspect of Eq.~(\ref{Mavio}) is that the map $\sf{CF}(\rho) \mapsto \text{max}\, \Delta_\chi (\rho)$ is a cohomological invariant, since $\H(\chi,\overline{\Lambda})/|E_0|$ is one by Theorem~\ref{Hamming}. In this way, Theorems~\ref{Hamming} and \ref{T_frac_beta} represent a unification of the resource-theoretic and the cohomological aspects of contextuality.

\section{Cohomological proofs of contextuality based on symmetry}\label{CCS}

In the previous section we provided cohomological contextuality proofs based on parity. The central result therein, Theorem~\ref{T_frac_beta}, is by itself not topological, but a cohomological interpretation for it is provided by Theorem~\ref{Hamming}.   In this section we will consider symmetry-based versions of these results. The Hamming distance needs to be modified in order to include the symmetry group. 
We present two results of this kind, in Sections~\ref{CP1} and \ref{CP2}. In addition, one result from Section~\ref{CCP}, Corollary~\ref{C_prob_beta}, has a direct symmetry-based counterpart, and we present it in Section~\ref{CC1}.

\subsection{Symmetry-based counterpart to Corollary~\ref{C_prob_beta}}\label{CC1}

 Recall that  we have an additional requirement for the symmetry group $H$, namely $h\cdot \chi=\chi$ for all $h\in H$ that is 
\begin{equation*}
 h(T_a) =\omega^{\chi(a)-\chi(ha)}T_{ha}\;\;\text{ for all } a\in E_0.
\end{equation*}
Then $\tilde{\Phi}_\chi$ is a cocycle in $C^1(H,C^1(E,E_0))$ by Lemma \ref{phi_property}, and Theorem \ref{Thm_coho} applies.

\begin{Cor}\label{T_prob}
Consider  a physical setting $({\cal{O}},\rho)$, with a restricted value assignment $\chi:E_0 \longrightarrow \mathbb{Z}_d$ and a symmetry group $H$  with corresponding phase function $\Phi_\chi: Q \longrightarrow U_0$ such that $[\Phi]\neq 0$ in $H^1(Q,U_0)$. This setting is contextual if it holds that
$$ 
p_{\chi}(\rho) > 1 -\frac{1}{|E_0|}.
$$
\end{Cor}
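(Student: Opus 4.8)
The plan is to reduce this symmetry-based statement to the already-established parity-based Corollary~\ref{C_prob_beta}, using Theorem~\ref{Thm_coho} as the bridge. The key observation is that the data packaged into the hypothesis---a symmetry group $H$ satisfying $h\cdot\chi=\chi$ for all $h\in H$ (recalled at the start of Section~\ref{CC1}) together with a nontrivial phase class $[\Phi_\chi]\neq 0$ in $H^1(Q,U_0)$, which is the class denoted $[\Phi]$ in the statement---is exactly the input required by Theorem~\ref{Thm_coho}. So before anything else I would verify that these setup conditions are in force, in particular that $H$ preserves $E_0$ and fixes $\chi$, which is what guarantees via Lemma~\ref{phi_property} that $\tilde{\Phi}_\chi$ lives in $C^1(H,C^1(E,E_0))$ and that $\Phi_\chi$ descends to a well-defined map on the quotient $Q$.

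The single substantive step is then to invoke Theorem~\ref{Thm_coho}: under the stated symmetry condition, $[\Phi_\chi]\neq 0$ in $H^1(Q,U_0)$ forces $[\beta_\chi]\neq 0$ in $H^2(\cC(E,E_0))$. This is where the symmetry-to-parity translation happens, and all of the cohomological bookkeeping that powers it---the cocycle identities of Lemmas~\ref{chiCocyc} and \ref{phi_property}, and the $d^v$/$d^h$ commutation used in the proof of Theorem~\ref{Thm_coho}---has already been carried out upstream, so no new cocycle computation is needed here.

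Having obtained $[\beta_\chi]\neq 0$, I would finish by applying Corollary~\ref{C_prob_beta} essentially verbatim, since it asserts that $(\oO,\rho)$ is contextual whenever $[\beta_\chi]\neq 0$ and $p_{\chi}(\rho) > 1 - 1/|E_0|$, which is precisely the conclusion sought. For transparency I would trace the chain once: $[\beta_\chi]\neq 0$ yields, via Theorem~\ref{Thm_beta}, that no value assignment $\ss:E\to\Z_d$ with $\ss|_{E_0}=\chi$ exists, whence $\H(\chi,\overline{\Lambda})\geq 1$; substituting this lower bound into the inequality of Theorem~\ref{T_prob_beta} gives contextuality under $p_{\chi}(\rho) > 1 - 1/|E_0|$.

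The ``hard part'' here is not genuine difficulty but hypothesis matching: the entire analytic and cohomological content resides in Theorem~\ref{Thm_coho} and Corollary~\ref{C_prob_beta}, so the only real task is to confirm that the symmetry assumptions align exactly with those theorems' requirements. Once that alignment is checked, the corollary follows by composition with no additional estimates, which is why I expect the proof to be short.
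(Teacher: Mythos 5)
Your proof is correct and matches the paper's own argument exactly: the paper likewise invokes Theorem~\ref{Thm_coho} to convert $[\Phi_\chi]\neq 0$ into $[\beta_\chi]\neq 0$, and then applies Corollary~\ref{C_prob_beta}. Your additional tracing through Theorem~\ref{Thm_beta} and Theorem~\ref{T_prob_beta} simply unfolds what Corollary~\ref{C_prob_beta} already encapsulates, so no new content is needed.
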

\begin{proof}
 Since $[\Phi]\neq 0$ Theorem  \ref{Thm_coho} implies that $[\beta_\chi]\not= 0$. Then we can apply Corollary \ref{C_prob_beta} to conclude that the given system is contextual.
\end{proof}

\subsection{First symmetry-based counterpart to Theorems~\ref{T_prob_beta}-\ref{T_frac_beta}}\label{CP1}

As in the parity case the bound can be improved using a suitable Hamming distance with the cost of modifying the probability function. The symmetry group $Q$ enters into the picture for both the Hamming distance and the probability function. We define the set 
\begin{equation}\label{LQdef}
\bar{\Lambda}_Q = \set{ \ss\in C^1(E)  |\; d^vd^h\ss = -d^h\beta }
\end{equation}
which will replace the role of $\bar\Lambda$.

For the symmetry-based proofs we consider $d^h\chi$ and $d^h\ss|_{E_0}$ as functions of the form $Q\times E_0 \rightarrow \Z_d$, and their Hamming distance $\H(d^h\chi, d^h\ss|_{E_0})$. We denote by $\H(d^h\chi,d^h\bar{\Lambda}_Q)$ the minimum distance as $d^h\ss$ varies in the set $d^h\bar\Lambda_Q=\set{d^h\ss|\;\ss\in \bar\Lambda_Q}$.

We now include $\H(d^h\chi,d^h\bar{\Lambda}_Q)$ in a contextuality bound. This new bound requires that the quotient group $Q$ and the set $E_0$ are such that $[qa,a]=0$, $\forall q\in Q$ and all $a\in E_0$. We define a new probability function which invokes the quotient group $Q$,
$$
p_{d^h\chi}(\rho) = \frac{1}{|Q||E_0|} \sum_{(q,a)\in Q\times E_0} \Span{P_{qa-a,\, d^h\chi(q,a)-\beta(qa,a) } }_\rho.
$$
Using $T_{qa}T^{-1}_a = \omega^{\beta(qa,a)}T_{qa-a}$ the projector can be expressed as  
$$
 P_{qa-a,\, d^h\chi(q,a)-\beta(qa,a) } =  \frac{1}{d} \sum_{k\in \Z_d} \omega^{-k(\chi(qa)-\chi(a))} (T_{qa}T^{-1}_a)^k.
$$

\begin{Theorem}\label{T_prob_withHamm}
Consider  a physical setting $({\cal{O}},\rho)$, with a restricted value assignment $\chi:E_0 \longrightarrow \mathbb{Z}_d$ and a symmetry group $H$ such that $qa$ and $a$ commute for all $q\in Q$ and $a\in E_0$. This setting is contextual if it holds that
$$ 
p_{d^h\chi}(\rho) > 1 -\frac{\H(d^h\chi,d^h\bar\Lambda_Q)}{|Q||E_0|}.
$$
\end{Theorem}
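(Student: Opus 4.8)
The plan is to transplant the proof of Theorem~\ref{T_prob_beta} into the symmetry-graded double complex, replacing the single-site projectors $P_{a,\chi(a)}$ by the difference projectors $P_{qa-a,\,d^h\chi(q,a)-\beta(qa,a)}$ and the set $\bar\Lambda$ by $\bar\Lambda_Q$. First I would evaluate the ncHVM prediction for $p_{d^h\chi}$ on a single internal state $\nu$, with value assignment $\ss=\ss_\nu$. The hypothesis that $qa$ and $a$ commute ensures $T_{qa}$ and $T_a$ commute, so the multiplicativity constraint (\ref{LambdaConstr}) applies to the pair $T_{qa},T_a^{-1}$ and gives $\lambda_\nu(T_{qa}T_a^{-1})=\omega^{\ss(qa)-\ss(a)}$. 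Since $T_{qa}T_a^{-1}=\omega^{\beta(qa,a)}T_{qa-a}$, the eigenvalue that $\ss$ assigns to $T_{qa-a}$ is $\omega^{\,d^h\ss(q,a)-\beta(qa,a)}$, where $d^h\ss(q,a):=\ss(qa)-\ss(a)$. Matching this against the target exponent $d^h\chi(q,a)-\beta(qa,a)$ of the projector, the common phase $\beta(qa,a)$ cancels and the classical success indicator for the pair $(q,a)$ collapses to $\delta_{d^h\ss(q,a),\,d^h\chi(q,a)}$.

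Averaging over the ncHVM distribution $q_\rho$ and over the domain $Q\times E_0$, and then bounding the convex combination by its largest term exactly as in Theorem~\ref{T_prob_beta}, I obtain
$$
p_{d^h\chi}=\frac{1}{|Q||E_0|}\sum_{\ss\in\Lambda}q_\rho(\ss)\!\!\sum_{(q,a)\in Q\times E_0}\!\!\delta_{d^h\ss(q,a),\,d^h\chi(q,a)}\;\le\;\frac{1}{|Q||E_0|}\max_{\ss\in\Lambda}\sum_{(q,a)}\delta_{d^h\ss(q,a),\,d^h\chi(q,a)}.
$$
Next I would enlarge the maximization set from $\Lambda$ to $\bar\Lambda_Q$. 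Every value assignment is $\beta$-compatible, $d\ss=-\beta$; reading $d$ as the vertical coboundary $d^v$ and applying $d^h$, the commutation $d^hd^v=d^vd^h$ used in Lemma~\ref{phi_property} yields $d^vd^h\ss=-d^h\beta$, so $\ss\in\bar\Lambda_Q$ and hence $\Lambda\subseteq\bar\Lambda_Q$. Because maximizing the number of agreements between $d^h\chi$ and $d^h\ss|_{E_0}$ over the $|Q||E_0|$-element domain $Q\times E_0$ is the same as minimizing their Hamming distance, the enlarged maximum equals $|Q||E_0|-\H(d^h\chi,d^h\bar\Lambda_Q)$. Dividing by $|Q||E_0|$ gives $p_{d^h\chi}\le 1-\H(d^h\chi,d^h\bar\Lambda_Q)/(|Q||E_0|)$ for every ncHVM, so a state violating this bound admits none and the scenario is contextual.

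The one step that is not purely formal is the first. The whole reduction hinges on the projector's target eigenvalue having been chosen as $d^h\chi(q,a)-\beta(qa,a)$ precisely so that the $\beta$-phase it carries matches the $\beta$-phase produced by multiplicativity on $T_{qa}T_a^{-1}$, leaving the clean indicator $\delta_{d^h\ss(q,a),\,d^h\chi(q,a)}$; I would verify this phase cancellation explicitly from the displayed form of $P_{qa-a,\,d^h\chi(q,a)-\beta(qa,a)}$. This is also where the commutativity hypothesis is indispensable---without it $T_{qa}T_a^{-1}$ need neither lie in $\oO$ nor carry a joint ncHVM value, and the passage from the parity bound of Theorem~\ref{T_prob_beta} to its symmetry-based refinement breaks down.
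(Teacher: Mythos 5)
Your proposal is correct and takes essentially the same route as the paper's proof: evaluate the ncHVM prediction, use the commutativity hypothesis to identify $\ss(qa-a)$ with $d^h\ss(q,a)-\beta(qa,a)$ so that the $\beta$-phase in the projector cancels and the indicator becomes $\delta_{d^h\ss(q,a),\,d^h\chi(q,a)}$, bound the convex combination by its maximal term, enlarge the maximization from $\Lambda$ to $\overline{\Lambda}_Q$, and convert the maximal agreement count into $|Q||E_0|-\H(d^h\chi,d^h\overline{\Lambda}_Q)$. The only difference is one of detail: you explicitly derive the phase cancellation from Eq.~(\ref{LambdaConstr}) and justify the inclusion $\Lambda\subseteq\overline{\Lambda}_Q$ via $d^hd^v=d^vd^h$, steps the paper leaves implicit (the latter is recorded only in the proof of Corollary~\ref{Cor3}).
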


\begin{proof}
Assume that a ncHVM is provided with value assignments $\Lambda$ and a probability distribution $q$. The ncHVM expression $p_{d^h\chi}(q)$ for the quantity $p_{d^h\chi}(\rho)$ satisfies
$$
\begin{array}{rcl}
p_{d^h\chi}(q) &=& \displaystyle{\frac{1}{|Q||E_0|} \sum_{\ss \in \Lambda,a \in E_0} q(\ss) \delta_{d^h\chi(q,a)-\beta(qa,a),\ss(qa-a)}} \\
&\leq& \displaystyle{\frac{1}{|Q||E_0|} \max_{\ss\in \Lambda} \sum_{a\in E_0}  \delta_{d^h\chi(q,a),d^h\ss(q,a)}} \\
&\leq& \displaystyle{ \frac{1}{|Q||E_0|} \max_{\ss\in \overline{\Lambda}_Q} \sum_{a\in E_0} \delta_{d^h\chi(q,a),d^h\ss(q,a)}} \\
&=& \displaystyle{\frac{1}{|Q||E_0|} (|Q||E_0|- \H(d^h\chi,d^h\overline{\Lambda}_Q )).}
\end{array}
$$  
where in the second line we use $\ss(qa-a) = \ss(qa)-\ss(a)- \beta(qa,a)$ since by assumption $qa$ commutes with $a$.
Therefore, if $p_{d^h\chi}(\rho)$ is larger than $1 - \H(d^h\chi, d^h\overline{\Lambda}_Q)/|E_0|$ then no ncHVM can describe the given scenario.
\end{proof}

{\em{Example.}} 
Continuing with the Mermin star example we consider $d^h\bar\Lambda_Q$ that is the set consisting of $d^h\ss$ where $\ss\in \bar\Lambda_Q$.  
Functions in $\bar{\Lambda}_Q$ satisfy 
$\ss(a_{XXX})+\ss(a_{YYX})+\ss(a_{XYY})+\ss(a_{YXY})=0$ (similar to $\bar\Lambda$).
Then we see that the restriction of  $d^h\ss(q,-)$ to $E_0$ either maps all edges in $E_0$ to $0$ or it maps them to $1$. Taking $\chi$ as  before, $\chi(a_{XXX})=0$ and on other edges it takes the value $1$,   the Hamming distance
$$
\H(d^h\chi,d^h\bar{\Lambda}_Q)=2
$$ 
since $d^h\chi$ sends $a_{XXX},a_{YYX}$ to $1$, and $a_{XYY},a_{YXY}$ to $0$. We get the same result if we use $d^h\bar\Lambda$ instead. 
Therefore the bound gives
  $$
p_{d^h\chi}(q) \leq  1- \frac{\H(d^h\chi, d^h\bar\Lambda) }{|Q||E_0|} = 1- \frac{2}{2\cdot 4} =\frac{3}{4}
$$
as in the parity case.

We show that this Hamming distance is an invariant in group cohomology.

\begin{Theorem}\label{Hamming2}
Let $H$ and $H'$ be symmetries of the system $(E,E_0,\chi)$ and $(E,E_0,\chi')$, and $N\subset H$ and $N' \subset H'$ normal subgroups that fix the edges in $E_0$ such that $H/N=H'/N'=Q$. It holds that
if $[\Phi_\chi] = [\Phi_{\chi'}]$ then $\H(d^h\chi,d^h\bar\Lambda_Q)=\H(d^h\chi',d^h\bar\Lambda_Q)$.
\end{Theorem}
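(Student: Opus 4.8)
The plan is to reduce the claim to a translation-invariance property of the Hamming distance, exactly as in the parity case (Theorem~\ref{Hamming}), and then to feed the cohomological hypothesis into that reduction. The first observation is that $\beta$, $B_1$, $U_0$ and the set $\overline{\Lambda}_Q=\{\ss\in C^1(E)\mid d^vd^h\ss=-d^h\beta\}$ are all determined by the operator algebra (\ref{prod_T}) and the common $Q$-action on $\cC(E,E_0)$; since $N$ and $N'$ fix every edge, $d^h$ descends to the \emph{same} operator at the level of $Q$ for both $H$ and $H'$. Hence $\overline{\Lambda}_Q$, and with it the affine subset $A=d^h\overline{\Lambda}_Q|_{E_0}\subset\Z_d^{Q\times E_0}$, is literally the same target set in both scenarios; only the reference points $d^h\chi|_{E_0}$ and $d^h\chi'|_{E_0}$ differ. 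Writing $A=v_0+D$ with direction space $D=\{d^h t|_{E_0}\mid t\in C^1(E),\,d^vd^h t=0\}$, translation invariance of the Hamming distance on $\Z_d^{Q\times E_0}$ gives $\H(v,A)=\H(v',A)$ whenever $v-v'\in D$. So it suffices to prove
$$
d^h\chi|_{E_0}-d^h\chi'|_{E_0}=d^h(\bar\chi-\bar\chi')|_{E_0}\in D.
$$

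To produce a witness in $D$, I unwind the hypothesis. Because $\beta$ is scenario-independent and $d^v\tilde\Phi=d^h\beta$ holds for both symmetry groups (Lemma~3 of \cite{Coho}), the phase functions of $H$ and $H'$ agree after restriction to $B_1$: on $\partial_R f$ both evaluate to $\beta(qf)-\beta(f)$. Subtracting the two composites defining $\Phi_\chi$ and $\Phi_{\chi'}$ cancels these contributions and leaves $(\Phi_\chi-\Phi_{\chi'})(q)=\bigl[\,d^h(\bar\chi-\bar\chi')(q,\cdot)\,\bigr]\big|_{B_1}$. The hypothesis $[\Phi_\chi]=[\Phi_{\chi'}]$ in $H^1(Q,U_0)$ then says this $U_0$-valued cocycle is a coboundary $d^h\eta$ for some $\eta\in U_0$. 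Choosing a lift $\hat\eta\in C^1(E,E_0)$ with $\hat\eta|_{B_1}=\eta$ and setting $\ss_0=\bar\chi-\bar\chi'-\hat\eta$, the coboundary relation reads $d^vd^h\ss_0=0$, so $\ss_0\in\ker(d^vd^h)$ and $d^h\ss_0|_{E_0}\in D$ by definition of $D$.

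The remaining, and main, difficulty is that $\ss_0$ does not restrict to $d^h(\bar\chi-\bar\chi')|_{E_0}$ on the nose: since $\hat\eta(a)=0$ for $a\in E_0$ while $\hat\eta(qa)$ need not vanish, one finds $d^h\ss_0(q,a)=d^h(\bar\chi-\bar\chi')(q,a)-\hat\eta(qa)$, leaving a residual $r(q,a)=\hat\eta(qa)$ on $Q\times E_0$. Thus the goal is equivalent to $r\in D$. A useful preliminary is that $r\bmod D$ is independent of the lift: two lifts differ by a relative $1$-cocycle $\mu$ (i.e. $\mu|_{B_1}=0$), and such $\mu$ lies in $\ker(d^vd^h)$, so $d^h\mu|_{E_0}\in D$; hence $r\bmod D$ is an honest secondary invariant that must be shown to vanish. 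This is exactly the analogue of the fact ``$s|_{E_0}=0$'' that collapsed the parity computation, and it is where the structural hypotheses enter. The plan for the crux is to exploit the faces the commutation assumption makes available: for $a\in E_0$, since $qa$ commutes with $a$ the face $[a\,|\,qa-a]$ lies in $C_2(E,E_0)$, and because $\cC_*(E_0)$ is one-dimensional its relative boundary is $\partial_R[a|qa-a]=[qa]-[qa-a]\in B_1$, whence $\hat\eta(qa)=\hat\eta(qa-a)+\eta([qa]-[qa-a])$. Running this identity through the definition of $D$ — exhibiting a $t\in\ker(d^vd^h)$ with $d^ht|_{E_0}=r$ — is the one genuinely nontrivial step, and I expect the one-dimensionality of $\cC_*(E_0)$ (guaranteeing that no extra relative faces spoil this boundary computation) to be indispensable here.

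Once $r\in D$ is established, so is $d^h(\bar\chi-\bar\chi')|_{E_0}\in D$, and the two reference points lie in a common coset modulo $D$. Measured against the single affine target $d^h\overline{\Lambda}_Q|_{E_0}$, the translation-invariance argument of the first paragraph then yields $\H(d^h\chi,d^h\overline{\Lambda}_Q)=\H(d^h\chi',d^h\overline{\Lambda}_Q)$, in complete parallel with Theorem~\ref{Hamming}.
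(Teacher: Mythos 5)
Your overall reduction is exactly the paper's: $\overline{\Lambda}_Q$ is (when nonempty) a coset of $\ker(d^vd^h)$ in $C^1(E)$, so its restriction $A=d^h\overline{\Lambda}_Q|_{E_0}$ is an affine set $v_0+D$, and by translation invariance of the Hamming distance the theorem follows once $d^h(\bar\chi-\bar\chi')|_{Q\times E_0}\in D$. Your unwinding of $[\Phi_\chi]=[\Phi_{\chi'}]$ into $d^vd^h(\bar\chi-\bar\chi'-\hat\eta)=0$ is likewise the paper's step around Eq.~(\ref{PhiUnravel}), with your lift $\hat\eta$ playing the role of the paper's $s\in C^1(E,E_0)$ (the paper leaves the lifting along the surjection $C^1(E,E_0)\rightarrow U_0$ implicit; making it explicit is a good touch). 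The problem is your third paragraph: you correctly identify that everything hinges on showing the residual $r(q,a)=\hat\eta(qa)$ lies in $D$, but you do not prove it --- you outline a ``plan'' and state an expectation. As submitted, the argument is incomplete at precisely the step you yourself call the crux.

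The gap is, however, self-inflicted, and closes in one line. In this paper a symmetry of $(E,E_0,\chi)$ is by definition (Section 3.4) a subgroup $H$ of the symmetry group that \emph{preserves $E_0$} and satisfies $h\cdot\chi=\chi$; preservation of $E_0$ is not optional, since the condition $h\cdot\chi=\chi$, i.e.\ $\chi(ha)+\tilde\Phi_h(a)=\chi(a)$, only parses when $ha\in E_0$ (cf.\ Eqs.~(\ref{H_chi}) and (\ref{PhiChiInvar})). Hence $qa\in E_0$ for all $q\in Q$ and $a\in E_0$, and since $\hat\eta\in C^1(E,E_0)$ vanishes on $C_1(E_0)$, your residual $r(q,a)=\hat\eta(qa)$ is identically zero. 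Thus $d^h\hat\eta|_{Q\times E_0}=0$, so $d^h(\bar\chi-\bar\chi')|_{Q\times E_0}=d^h(\bar\chi-\bar\chi'-\hat\eta)|_{Q\times E_0}\in D$ with witness $t=\bar\chi-\bar\chi'-\hat\eta\in\ker(d^vd^h)$, and your own first and last paragraphs finish the proof. This one-line observation is exactly the paper's justification ``$s|_{E_0}=0$ since $s\in C(E,E_0)$'', which silently uses $qa\in E_0$. Note also that your planned route to the crux could not have worked as written: it invokes the commutation hypothesis ($qa$ and $a$ commute), which is a hypothesis of Theorems~\ref{T_prob_withHamm} and \ref{T_frac} but not of Theorem~\ref{Hamming2}, and the claimed boundary $\partial_R[a|qa-a]=[qa]-[qa-a]$ fails once $qa\in E_0$ is taken into account, because $[qa]$ is then itself zero in the relative complex.
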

\begin{proof}
The equation $[\Phi_\chi]=[\Phi_{\chi'}]$ means that $\Phi_\chi-\Phi_{\chi'} = d^hs$ where $s\in C^1(E,E_0)$. Unravelling the definitions of $\Phi_\chi$ and $\Phi_{\chi'}$ we have
\begin{equation}\label{PhiUnravel} 
\tilde\Phi(\theta'(q),\partial f)+d^h\bar\chi'(q,\partial f) - \tilde\Phi(\theta(q),\partial f)-d^h\bar\chi(q,\partial f) = d^hs(q,\partial f) 
\end{equation}
where $\theta$ and $\theta'$ are the sections corresponding to the symmetry groups $H$ and $H'$. After pulling the relative boundary out as $d^v$ we use the relation $d^v\tilde\Phi=d^h\beta$, which allows us to forget about the sections and retain only the symmetry element $q$ in the arguments. Cancelling $d^h\beta$ we find that
$$
d^vd^h(\bar{\chi}'-\bar{\chi}-s)=0.
$$
Therefore,  given $\chi$, $\chi'$ satisfying $[\Phi_\chi] =[\Phi_{\chi'}]$, there exists an $s \in C(E,E_0)$ such that
\begin{equation}\label{LQprop}
\{\mathfrak{s}-s-\bar{\chi}+\bar{\chi}',\, \mathfrak{s} \in \overline{\Lambda}_Q\} = \overline{\Lambda}_Q.
\end{equation}
We now turn to the Hamming distance. We have
$$
\begin{array}{rcl}
\H(d^h\chi,d^h\overline{\Lambda}_Q) &=& \min_{\ss \in \overline{\Lambda}_Q} \H(d^h\chi,d^h\ss|_{E_0})\\
&=& \min_{\ss \in \overline{\Lambda}_Q} \H(0,d^h(\ss-s-\overline{\chi})|_{E_0})\\
&=& \min_{\ss \in \overline{\Lambda}_Q} \H(d^h\chi', d^h(\ss-s-\overline{\chi}+\overline{\chi}')|_{E_0}) \\
&= &\H(d^h\chi', d^h \overline{\Lambda}_Q).
\end{array}
$$
Therein, in the second line, $s|_{E_0}=0$ since $s\in C(E,E_0)$. The last line follows with Eq.~(\ref{LQprop}).
\end{proof}
We can generalize Theorem~\ref{T_prob_withHamm} by invoking the contextual fraction, in the same way as we promoted  Theorem~\ref{T_prob_beta} to Theorem~\ref{T_frac_beta}.

\begin{Theorem}\label{T_frac}  
Consider  a physical setting $({\cal{O}},\rho)$, with a restricted value assignment $\chi:E_0 \longrightarrow \mathbb{Z}_d$ and a symmetry group $H$   such that $qa$ and $a$ commute for all $q\in Q$ and $a\in E_0$. 
The probability function $p=p_{d^h\chi}(\rho)$ then satisfies 
\begin{equation}\label{psBound2} 
p \leq  1-  \frac{{\sf{NCF}(\rho)}\,\H(d^h\chi,d^h\overline{\Lambda}_Q)}{|Q||E_0|}.  
\end{equation}
\end{Theorem}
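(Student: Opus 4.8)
The plan is to mirror exactly the passage from Theorem~\ref{T_prob_withHamm} to its probabilistic refinement, in the same way that Theorem~\ref{T_frac_beta} refined Theorem~\ref{T_prob_beta}. The single essential input is the decomposition Eq.~(\ref{RhoSplit}) of the empirical model $e$ arising from $\rho$ into a contextual portion $e^{C}$ carrying weight $\CFs(\rho)$ and a non-contextual portion $e^{NC}$ carrying weight $\NCFs(\rho)$, together with the bound already established in Theorem~\ref{T_prob_withHamm}. No new estimate is needed; the whole argument is a convexity step.

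First I would observe that the probability function $p_{d^h\chi}$ is, by its definition as a weighted sum of projector expectation values $\Span{P_{qa-a,\,\ldots}}_\rho$, linear in the state and hence in the empirical model. Consequently the decomposition Eq.~(\ref{RhoSplit}) passes through term by term, giving
$$
p_{d^h\chi}(\rho) = \CFs(\rho)\, p^{C} + \NCFs(\rho)\, p^{NC},
$$
where $p^{C}$ and $p^{NC}$ denote the value of $p_{d^h\chi}$ evaluated on $e^{C}$ and $e^{NC}$, respectively. I would then bound the two contributions separately: for the contextual part I use only the trivial bound $p^{C}\leq 1$, valid because $p_{d^h\chi}$ is an average of projector expectation values and therefore lies in $[0,1]$; for the non-contextual part, since $e^{NC}$ is by construction described by a ncHVM and the commutativity hypothesis $[qa,a]=0$ of Theorem~\ref{T_prob_withHamm} is in force, that theorem applies and yields $p^{NC}\leq 1-\H(d^h\chi,d^h\overline{\Lambda}_Q)/(|Q||E_0|)$. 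Substituting both bounds and using $\CFs(\rho)+\NCFs(\rho)=1$ collapses the $\CFs(\rho)$ term and reproduces Eq.~(\ref{psBound2}).

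Since the heavy lifting is done by Theorem~\ref{T_prob_withHamm}, the only point demanding care is the legitimacy of splitting $p_{d^h\chi}$ across the mixture. One must confirm that $p_{d^h\chi}$ depends on $\rho$ only through the empirical outcome statistics on the contexts it references, so that its evaluation on $e^{NC}$ is well defined and Theorem~\ref{T_prob_withHamm}---stated for a genuine ncHVM---applies verbatim to $p^{NC}$. Given the explicit projector form of $p_{d^h\chi}$ and the assumption that $qa$ commutes with $a$, this is routine, and I expect no genuine obstacle to arise.
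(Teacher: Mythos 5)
Your proposal is correct and is exactly the paper's proof: the paper states that Theorem~\ref{T_frac} follows from Theorem~\ref{T_prob_withHamm} by the same convexity argument used to derive Theorem~\ref{T_frac_beta} from Theorem~\ref{T_prob_beta}, namely splitting the empirical model via Eq.~(\ref{RhoSplit}), using linearity of $p_{d^h\chi}$ in the state, bounding the contextual part by $1$, and applying Theorem~\ref{T_prob_withHamm} to the non-contextual part.
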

The proof of Theorem~\ref{T_frac} given Theorem~\ref{T_prob_withHamm} is the same as the proof for Theorem~\ref{T_frac_beta} given Theorem~\ref{T_prob_beta}.

\subsection{Second symmetry-based counterpart to Theorems~\ref{T_prob_beta}-\ref{T_frac_beta}}\label{CP2}

We have the following result.
\begin{Cor}\label{Cor3}
A scenario $(\oO,\rho)$ is contextual  if 
\begin{equation}\label{ncie2}
p_{\chi}(\rho) > \displaystyle{1 -\frac{\H(\chi, \overline{\Lambda}_Q)}{|E_0|}}.
\end{equation}
\end{Cor}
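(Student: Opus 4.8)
The plan is to mimic the proof of Theorem~\ref{T_prob_beta} essentially verbatim, replacing the set $\overline{\Lambda}$ of $\beta$-compatible cochains by the larger set $\overline{\Lambda}_Q$ defined in Eq.~(\ref{LQdef}). The only genuinely new input required is the containment $\Lambda \subseteq \overline{\Lambda}_Q$, where $\Lambda$ is the set of value assignments of an arbitrary ncHVM. Once this is secured, the convexity-and-counting chain that bounds the ncHVM prediction $p_\chi(q)$ carries over with no change.

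First I would establish $\overline{\Lambda} \subseteq \overline{\Lambda}_Q$. Take any $\ss \in \overline{\Lambda}$, so that $d\ss = -\beta$. The coboundary $d$ of the relative complex is precisely the vertical coboundary $d^v$ of the bicomplex, so this reads $d^v\ss = -\beta$. Applying the horizontal coboundary $d^h$ to both sides and using the commutation $d^h d^v = d^v d^h$ (the identity already invoked in the proof of Lemma~\ref{phi_property}) gives $d^v d^h\ss = -d^h\beta$, which is exactly the defining condition of $\overline{\Lambda}_Q$ in Eq.~(\ref{LQdef}). Hence $\overline{\Lambda} \subseteq \overline{\Lambda}_Q$. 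Combining this with $\Lambda \subseteq \overline{\Lambda}$ from Eq.~(\ref{LLbar}) yields $\Lambda \subseteq \overline{\Lambda}_Q$.

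With the containment in hand, I would run the estimate of Theorem~\ref{T_prob_beta}. Assuming a ncHVM with value assignments $\Lambda$ and distribution $q$, write the ncHVM value as $p_\chi(q) = \frac{1}{|E_0|}\sum_{\ss\in\Lambda,\,a\in E_0} q(\ss)\,\delta_{\chi(a),\ss(a)}$, bound the convex average by the maximum over $\ss\in\Lambda$, enlarge the maximization domain to $\overline{\Lambda}_Q$ using $\Lambda\subseteq\overline{\Lambda}_Q$, and recognize the result as $|E_0| - \H(\chi,\overline{\Lambda}_Q)$ by definition of the Hamming distance to $\overline{\Lambda}_Q$. Dividing by $|E_0|$ shows $p_\chi(q) \leq 1 - \H(\chi,\overline{\Lambda}_Q)/|E_0|$ for every ncHVM, so a state violating the reverse inequality admits no ncHVM description and is therefore contextual.

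I expect the main, and essentially only, obstacle to be the containment step: one must verify carefully that $d\ss=-\beta$ really is the vertical relation $d^v\ss=-\beta$ and that $d^h$ commutes past $d^v$, so that the $\overline{\Lambda}_Q$ condition is genuinely implied rather than merely analogous. Everything downstream is the counting argument already used for Theorem~\ref{T_prob_beta}. It is worth flagging that since $\overline{\Lambda}\subseteq\overline{\Lambda}_Q$ one has $\H(\chi,\overline{\Lambda}_Q)\le\H(\chi,\overline{\Lambda})$, so the threshold here is at least as large as in Theorem~\ref{T_prob_beta} and the witness is in general no stronger; the value of this symmetry-based form lies in its compatibility with the group-cohomological invariant of Theorem~\ref{Hamming2} rather than in a sharper bound.
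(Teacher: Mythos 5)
Your proof is correct and rests on the same key step as the paper's: the containment $\overline{\Lambda}\subseteq\overline{\Lambda}_Q$, obtained by applying $d^h$ to $d^v\mathfrak{s}=-\beta$ and using $d^hd^v=d^vd^h$. The paper merely packages the remainder differently---it notes $\H(\chi,\overline{\Lambda}_Q)\leq\H(\chi,\overline{\Lambda})$ (minimizing over a larger set) and then invokes Theorem~\ref{T_prob_beta} as a black box, whereas you inline the counting argument of that theorem with $\overline{\Lambda}_Q$ in place of $\overline{\Lambda}$; the two routes are logically equivalent, and your closing remark that the resulting witness is no stronger than Theorem~\ref{T_prob_beta} matches the paper's own framing.
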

\begin{proof}
We recall the definitions of $\overline{\Lambda}$ and $\overline{\Lambda}_Q$, cf. Def.~\ref{def_beta_comp} and Eq.~(\ref{LQdef}). Since $d^v\mathfrak{s} = -\beta$ implies $d^vd^h\mathfrak{s} = -d^h\beta$, it holds that
$
\overline{\Lambda} \subseteq \overline{\Lambda}_Q
$. Thus, $\H(\chi, \overline{\Lambda}_Q) \leq \H(\chi, \overline{\Lambda})$, and Eq.~(\ref{ncie2}) follows with Theorem~\ref{T_prob_beta}.
\end{proof}
Again our goal is to show that the quantity on the r.h.s. of Eq.~(\ref{ncie2}) is an invariant under group cohomology.
\begin{Theorem}\label{Hamming3}
Let $H$ and $H'$ be symmetries of the system $(E,E_0,\chi)$ and $(E,E_0,\chi')$, and $N\subset H$ and $N' \subset H'$ normal subgroups that fix the edges in $E_0$ such that $H/N=H'/N'=Q$. Then,
$[\Phi_\chi] = [\Phi_{\chi'}]$ implies 
$$\H(\chi,\overline{\Lambda}_Q)=\H(\chi',\overline{\Lambda}_Q).$$
\end{Theorem}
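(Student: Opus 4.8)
The plan is to run the combinatorial argument of Theorem~\ref{Hamming} (the parity case) essentially verbatim, but with $\overline{\Lambda}_Q$ in place of $\overline{\Lambda}$, importing the required cohomological input from the proof of Theorem~\ref{Hamming2} rather than reproving it. First I would extract from the hypothesis $[\Phi_\chi]=[\Phi_{\chi'}]$ the bijection already established midway through the proof of Theorem~\ref{Hamming2}: this equality of classes gives an $s\in C^1(E,E_0)$ with $\Phi_\chi-\Phi_{\chi'}=d^hs$, whence, after pulling out the relative boundary as $d^v$ and cancelling the common $d^h\beta$, one gets $d^vd^h(\bar\chi'-\bar\chi-s)=0$. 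Since $\overline{\Lambda}_Q$ from Eq.~(\ref{LQdef}) is the affine set of solutions of $d^vd^h\ss=-d^h\beta$, translating by the kernel element $-s-\bar\chi+\bar\chi'$ permutes it, giving exactly Eq.~(\ref{LQprop}),
$$\{\mathfrak{s}-s-\bar{\chi}+\bar{\chi}',\ \mathfrak{s}\in\overline{\Lambda}_Q\}=\overline{\Lambda}_Q.$$
Crucially, this derivation never refers to the downstream Hamming distance, so it transfers unchanged to the present setting.

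Second, I would feed this bijection into the same chain of equalities used in Theorem~\ref{Hamming}. Starting from $\H(\chi,\overline{\Lambda}_Q)=\min_{\ss\in\overline{\Lambda}_Q}\H(\chi,\ss|_{E_0})$, I would use (i) that $s$ vanishes on $E_0$ (as $s\in C^1(E,E_0)$), so $\ss|_{E_0}=(\ss-s)|_{E_0}$, and (ii) translation invariance of the Hamming distance, $\H(f,g)=\H(f+t,g+t)$, applied first with $t=-\chi$ and then with $t=\chi'$. Because $\bar\chi|_{E_0}=\chi$ and $\bar\chi'|_{E_0}=\chi'$, these two translations turn the minimand into $\H(\chi',(\ss-s-\bar\chi+\bar\chi')|_{E_0})$; re-indexing the minimum via Eq.~(\ref{LQprop}) then collapses it to $\min_{\ss}\H(\chi',\ss|_{E_0})=\H(\chi',\overline{\Lambda}_Q)$, which is the assertion.

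This statement sits between Theorem~\ref{Hamming} and Theorem~\ref{Hamming2}: it uses the Hamming distance between functions on $E_0$ directly, as in Corollary~\ref{Cor3} and Eq.~(\ref{ncie2}), yet the invariance is taken with respect to the symmetry class $[\Phi_\chi]$ and the set $\overline{\Lambda}_Q$, as in Theorem~\ref{Hamming2}. I do not expect a genuine obstacle, since all three ingredients---the affine structure of $\overline{\Lambda}_Q$, the vanishing $s|_{E_0}=0$, and translation invariance of $\H$---are already in hand. The one point deserving a line of care is the bookkeeping of the restriction to $E_0$: one must check $(-s-\bar\chi+\bar\chi')|_{E_0}=\chi'-\chi$ so that the translation step lands precisely on $\chi'$, which is immediate once $s|_{E_0}=0$ is invoked. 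In short, the proof is a transcription of Theorem~\ref{Hamming} with $\overline{\Lambda}$ replaced by $\overline{\Lambda}_Q$ and Eq.~(\ref{eqLambda}) replaced by Eq.~(\ref{LQprop}).
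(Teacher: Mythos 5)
Your proposal is correct and follows essentially the same route as the paper: the paper's proof of Theorem~\ref{Hamming3} likewise takes the $s\in C(E,E_0)$ with $\Phi_\chi-\Phi_{\chi'}=d^hs$ granted by the hypothesis, invokes Eq.~(\ref{LQprop}) as established in the proof of Theorem~\ref{Hamming2}, and runs the same chain of Hamming-distance equalities (translation invariance plus $s|_{E_0}=0$, then re-indexing the minimum) as in Theorem~\ref{Hamming}. No gaps; your extra remark that $(-s-\bar\chi+\bar\chi')|_{E_0}=\chi'-\chi$ is precisely the bookkeeping the paper leaves implicit.
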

\begin{proof}
We have
$$
\begin{array}{rcll}
\displaystyle{\H(\chi,\overline{\Lambda}_Q)} &=& \displaystyle{\min_{\mathfrak{s}\in \overline{\Lambda}_Q} \H(\chi,\mathfrak{s}|_{E_0})}\\
&=& \displaystyle{\min_{\mathfrak{s}\in \overline{\Lambda}_Q} \H(0,(\mathfrak{s}-\overline{\chi})|_{E_0})}\\
&=& \displaystyle{\min_{\mathfrak{s}\in \overline{\Lambda}_Q} \H(\chi',(\mathfrak{s}-s-\overline{\chi}+\overline{\chi}')|_{E_0})}, & \text{for some}\, s\in C(E,E_0)\\
&=& \displaystyle{\min_{\mathfrak{s}\in \overline{\Lambda}_Q} \H(\chi',\mathfrak{s}|_{E_0})}\\
&=& \displaystyle{\H(\chi',\overline{\Lambda}_Q)}\\
\end{array}
$$
Therein, in the third line we choose the particular $s\in C(E,E_0)$ that satisfies the relation $\Phi_\chi - \Phi_{\chi'} = d^hs$,  granted from the condition $[\Phi_\chi] = [\Phi_{\chi'}]$. In the fourth line we have used Eq.~(\ref{LQprop}). 
\end{proof}

\section{A computational interpretation of the contextual fraction}\label{OI}

Contextuality is for measurement-based quantum computation. This was first revealed in \cite{AB}, where the state-dependent version of Mermin's star \cite{Merm} was repurposed as a small MBQC evaluating an OR-gate. In MBQC, the evaluation of an OR gate, and, in fact, any non-linear Boolean function, requires contextuality. 

This  result can be puzzling.  Per se, there is nothing quantum about OR gates; it can hardly get any more classical in computation. If so, then how can these gates be contextual?---The resolution is that OR-gates are classical when executed by classical means, as they normally are. They require quantumness, however, when executed as MBQCs. The statement \cite{AB} does not lead to a contradiction because its domain of applicability is so narrow. Ways of evaluating Boolean functions other than MBQC, in particular classical ways, are not constrained by it. 

Yet,  there {\em{is}} a connection between the efficiencies of evaluating non-linear Boolean functions by MBQC and by purely classical means. As we show in this section, the classical memory cost of storing a Boolean function can be high only if evaluating this function through MBQC is substantially contextual. Further, in Appendix~\ref{CC} we show that, with some additional assumptions on the set $E_0$, the same holds for the operational cost of evaluating a Boolean function.
\medskip

Up to now, the function $\chi$ has merely been a label for contextuality witnesses. For some such functions the maximum violation $\Delta_\chi$ of the corresponding non-contextuality inequality is high, for other functions $\chi$ it is low, and for yet others  there is no violation at all; see Eq.~(\ref{Mavio}). There are limiting cases, such as the maximal violation of Mermin's inequality in the GHZ scenario, where the witness $p_{\chi}$ assumes its optimal value of 1. These limiting cases amount to determining the function $\chi$ by measurement of the observables $\{T_a|\, a\in E_0\}$.

Now, even away from these limiting cases, we may regard the measurement of a contextuality witness as the probabilistic evaluation of the corresponding function $\chi$ on all inputs, with average success probability $p_{\chi}(\rho)$. This observation induces a shift in how $\chi$ may be viewed, from parameter in contextuality witnesses to function computable by physical measurement. Measurement-based quantum computation pertains to the latter view, for sets $E_0$ with a special structure \cite{MQCcoho}. 

With this in mind, we consider the task of evaluating the function $\chi:E_0 \longrightarrow \mathbb{Z}_d$, by measurement of the quantum state $\rho$. To evaluate $\chi(a)\in \mathbb{Z}_d$ for any given $a\in E_0$, the observable $T_a=\eta(a)$ is measured and the corresponding outcome is reported. This is in general a probabilistic process. We may compare it to a classical process computing the function $\chi$ with the same average success probability, and ask how much information the classical process needs to have about $\chi$.

 Since the present settings allow for non-contextual value assignments, with Lemma~\ref{BetaProp} we have $[\beta]=0$. Therefore, we can choose the function $\eta$ such that $\beta\equiv 0$. We call this specific choice of function $\eta_0$.

\begin{Theorem}\label{T2}
Consider the probabilistic computation of a function $\chi: E_0 \longrightarrow \mathbb{Z}_d$, (a) by quantum means via the measurement of the observables  $\eta_0(E_0)$, and (b) by classical means. Then, the amount $I$ of information, in bits, required by the optimal classical routine (b) to compute $\chi$ with the same average success probability as the quantum routine (a) is bounded by
\begin{equation}\label{MemCo}
I  \leq  C\,  \lceil {\sf{CF}}(\rho)\, \H(\chi,\overline{\Lambda})\rceil + D,
\end{equation}
with $C=(\lceil \log_2|E_0|\rceil  + \lceil \log_2d-1\rceil)$ and $D=\lceil \log_2d\rceil \log_d\left|\overline{\Lambda}\right|$.
\end{Theorem}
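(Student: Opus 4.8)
Theorem~\ref{T2} bounds the classical information cost $I$ of computing $\chi$ with the same success probability achievable quantumly. The plan is to set up a concrete classical protocol that mimics the quantum measurement, then count the bits it consumes.

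**The classical protocol.** First I would use Theorem~\ref{T_frac_beta}, which tells us the quantum success probability $p_\chi(\rho)$ is at most $1-\mathsf{NCF}(\rho)\,\H(\chi,\overline{\Lambda})/|E_0|$. Equivalently, the quantum routine misses $\chi$ on at least $\mathsf{NCF}(\rho)\,\H(\chi,\overline{\Lambda})$ of the $|E_0|$ inputs on average; the ``room for error'' the classical routine is allowed to exploit is $\mathsf{CF}(\rho)\,\H(\chi,\overline{\Lambda})$ inputs. The idea is that a classical routine need not store $\chi$ exactly: it may store a cheaper proxy $\mathfrak{s}\in\overline{\Lambda}$ (a $\beta$-compatible cochain) together with a small correction list. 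Concretely, I would have the classical routine (i) specify one element $\mathfrak{s}\in\overline{\Lambda}$ realizing the minimum $\H(\chi,\overline{\Lambda})$, and (ii) store an explicit lookup table of the at most $\lceil\mathsf{CF}(\rho)\,\H(\chi,\overline{\Lambda})\rceil$ inputs $a\in E_0$ on which we choose to correct $\mathfrak{s}|_{E_0}$ to $\chi(a)$, leaving the remaining disagreements uncorrected. On input $a$, the routine outputs the corrected value if $a$ is in the table, and $\mathfrak{s}(a)$ otherwise.

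**Counting the bits.** The information content splits as the cost of the correction table plus the cost of specifying $\mathfrak{s}$. For each of the $\lceil\mathsf{CF}(\rho)\,\H(\chi,\overline{\Lambda})\rceil$ corrected inputs, the table stores which element $a\in E_0$ is corrected, at cost $\lceil\log_2|E_0|\rceil$ bits, and which of the remaining $d-1$ nonzero values $\chi(a)$ takes, at cost $\lceil\log_2 d-1\rceil$ bits. This yields the factor $C=\lceil\log_2|E_0|\rceil+\lceil\log_2 d-1\rceil$ multiplying $\lceil\mathsf{CF}(\rho)\,\H(\chi,\overline{\Lambda})\rceil$. Specifying the base cochain $\mathfrak{s}\in\overline{\Lambda}$ requires naming one of $|\overline{\Lambda}|$ elements; measuring this in bits gives the additive term $D=\lceil\log_2 d\rceil\log_d|\overline{\Lambda}|$, where the base-change $\log_d|\overline{\Lambda}|$ counts $\mathbb{Z}_d$-digits and $\lceil\log_2 d\rceil$ converts each digit to bits. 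Summing gives exactly Eq.~(\ref{MemCo}), where the choice $\eta_0$ with $\beta\equiv 0$ guarantees $\overline{\Lambda}$ consists of genuine value assignments so that each $\mathfrak{s}|_{E_0}$ is a legitimate classical strategy.

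**The main obstacle.** The delicate step is verifying that this protocol achieves at least the quantum success probability, not merely that it uses the stated number of bits. I must argue that correcting $\mathfrak{s}$ on $\lceil\mathsf{CF}(\rho)\,\H(\chi,\overline{\Lambda})\rceil$ well-chosen inputs raises the agreement with $\chi$ from $|E_0|-\H(\chi,\overline{\Lambda})$ to at least $p_\chi(\rho)\,|E_0|$; this follows because each correction fixes exactly one previously-wrong input, so correcting $k$ inputs gives success fraction $(|E_0|-\H(\chi,\overline{\Lambda})+k)/|E_0|$, and plugging $k=\lceil\mathsf{CF}(\rho)\,\H(\chi,\overline{\Lambda})\rceil$ together with the bound from Theorem~\ref{T_frac_beta} closes the gap. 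The subtlety is the ceiling functions and the distinction between the average-case success probability of the quantum routine and the worst-case correction count of the classical one; I expect the ceilings are precisely what absorb this discrepancy, making the inequality tight up to rounding.
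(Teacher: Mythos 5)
Your proposal is correct and follows essentially the same route as the paper's proof: approximate $\chi$ by the closest $\mathfrak{s}_{\text{opt}}\in\overline{\Lambda}$, store an exception list of size $\lceil{\sf{CF}}(\rho)\,\H(\chi,\overline{\Lambda})\rceil$, verify via Theorem~\ref{T_frac_beta} that the resulting success probability $1-\lfloor(1-{\sf{CF}}(\rho))\H(\chi,\overline{\Lambda})\rfloor/|E_0|$ meets the quantum bound, and count bits for the list and the base cochain. The only minor divergence is in justifying the additive term $D$: the paper uses the choice $\eta_0$ (hence $\beta\equiv 0$, hence $d\mathfrak{s}=0$) to argue $\overline{\Lambda}$ is a $\mathbb{Z}_d$-module of rank $\log_d|\overline{\Lambda}|$, so that $\mathfrak{s}_{\text{opt}}$ is specified by that many $\mathbb{Z}_d$-valued entries, whereas you enumerate $\overline{\Lambda}$ directly with a base change---both yield the same count, though the paper's module-rank argument is what makes $\log_d|\overline{\Lambda}|$ a meaningful (integer) quantity.
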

Thus, the classical memory cost for storing the function $\chi$ (or a sufficiently close approximation to it)  can be high only if the contextual fraction of the equivalent MBQC substantially deviates from zero. Furthermore, by comparison of Eqs.~(\ref{Mavio}) and (\ref{MemCo}), we find that the upper bounds on the violation  $\Delta_\chi(\rho)$  of non-contextuality inequalities and on the information $I$ depend on the quantum state $\rho$ and the function $\chi$ only through the product ${\sf{CF}}(\rho)\H(\chi,\overline{\Lambda})$.\medskip 

With extra conditions on the structure of the set $E_0$, e.g. through the invariance of $E_0$ under $Q$, Theorem~\ref{T2} can be extended to bound the operational cost of evaluating the function $\chi$; see Appendix~\ref{CC}.
\medskip

{\em{Proof of Theorem~\ref{T2}.}} We prove the statement by explicitly constructing an algorithm that computes $\chi$ and satisfies the conditions of the theorem. We start with a whole family of algorithms to compute $\chi$, and later pick one member. These algorithms use the best ncHVM approximation $\mathfrak{s}_\text{opt}\in \overline{\Lambda}$ of $\chi$ and a list $L$ of exceptions. Any list $L$ is a subset $L \subset L_\text{max}$, where
$$
L_\text{max} = \{\left(a,\chi(a)-\mathfrak{s}_\text{opt}(a)\right)|\; a\in E_0,\, \chi(a)\neq \mathfrak{s}_\text{opt}(a) \}.
$$
The algorithms are as follows: Given an input $a$, if $(a,\delta(a)) \in L$ for some $\delta(a)$ then the output is $\chi(a) = \mathfrak{s}_\text{opt}(a) + \delta(a)$, and otherwise the output is $\chi(a)=\mathfrak{s}_\text{opt}(a)$.
 
Within this family of classical algorithms for computing $\chi$, we choose a list $L$ of exceptions such that $|L| = \left\lceil{\sf{CF}}(\rho) \H(\chi,\overline{\Lambda}) \right\rceil$. The resulting function evaluations thus fails for $\lfloor{(1-\sf{CF}}(\rho))\H(\chi,\overline{\Lambda})\rfloor$ of the $|E_0|$ inputs, and the average success probability of function evaluation therefore is
$$
\overline{p}_S = 1 - \frac{\lfloor{(1-\sf{CF}}(\rho))\H(\chi,\overline{\Lambda})\rfloor}{|E_0|}.
$$
This equals (or slightly exceeds by virtue of rounding) the upper limit of what the MBQC with contextual fraction ${\sf{CF}}(\rho)$ can reach, cf. Theorem~\ref{T_frac_beta}. The algorithm is thus correct.

To recover the function $\chi$ with sufficient accuracy, the optimal value assignment $\mathfrak{s}$ and the list $L$ of exceptions are stored. The memory cost of storing the list $L$, with its $|L| = \left\lceil{\sf{CF}}(\rho) \H(\chi,\overline{\Lambda}) \right\rceil$ items, is $(\lceil \log_2|E_0|\rceil  + \lceil \log_2d-1\rceil)  \lceil {\sf{CF}}(\rho)\, \H(\chi,\overline{\Lambda})\rceil $. 
The memory cost for storing $\mathfrak{s}_\text{opt}$ is as follows.
With the special choice $\eta_0$ for the function $\eta$ it holds that $\beta\equiv 0$, and Eq.~(\ref{beta_ds}) implies that $d\mathfrak{s} =0$. Hence, $\overline{\Lambda}$ is a vector space, of rank $\log_d \left|\overline{\Lambda}\right|$. Therefore, the function $\mathfrak{s}_\text{opt} \in \overline{\Lambda}$ is fully specified by $\log_d\left| \overline{\Lambda} \right|$ evaluations of $\mathfrak{s}_\text{opt}$. The cost of storing this information is $\lceil \log_2d\rceil \log_d\left|\overline{\Lambda}\right|$ bits.  

Adding these two contributions gives the r.h.s. of (\ref{MemCo}). The minimal memory cost is the same or lower.  $\Box$\smallskip

We note that contextuality can also place {\em{lower}} bonds on the memory requirements for classically simulating quantum phenomena \cite{Kar}.

\section{Conclusion}\label{Concl}

In this paper, we have provided state-dependent probabilistic contextuality proofs in which the resource-theoretic perspective on quantum contextuality and the cohomological perspective  are combined.  The resource perspective is important because of the recently discovered connection between contextuality and quantum computation \cite{AB}, \cite{How}.The cohomological perspective finds strong relevance in MBQC, since even the simplest example of a contextual MBQC \cite{AB} has cohomological interpretation \cite{Coho}.

Furthermore, we have advanced the cohomological viewpoint to probabilistic state-dependent contextuality proofs. These proofs are based on contextuality witnesses, i.e., expectation values of suitable linear operators. Contextuality is demonstrated whenever the value of a witness exceeds a corresponding threshold. The cohomological aspect of this is that the threshold value is a cohomological invariant; cf. Theorems~\ref{Hamming}, \ref{Hamming2}.
	
	We have also unified the cohomological perspective with the resource perspective. At the center of this unification stands the notion of the contextual fraction \cite{ABsheaf}. We have provided the following results involving it: 
\begin{itemize}
	\item{The maximum possible amount of violation of cohomological non-contextuality inequalities is proportional to the contextual fraction of the considered setting; see Eq.~(\ref{Mavio}).}
	\item{The contextual  fraction has an operational interpretation that links it to classical computation. Namely, the classical evaluation of a Boolean function can be hard only if the MBQC evaluation of the same function requires a large contextual fraction; see Theorems~\ref{T2} and \ref{T2b}.}
\end{itemize}
At first sight, the cohomological language may seem a complication, but the opposite is the case. The cohomological viewpoint removes decorum and reveals the essential and invariant features of parity-based and symmetry-based contextuality proofs.

\medskip

{\em{Acknowledgements.}} This work is funded by NSERC (CO, RR), the Stewart Blusson Quantum Matter Institute (CO), and Cifar (RR).   

\appendix

\section{The contextual fraction bounds the cost of function evaluation}\label{CC}

With additional assumptions on the structure of the set $E_0$ that hold for measurement-based quantum computation, we can extend Theorem~\ref{T2} to a relation between the contextual fraction of an MBQC and the operational cost of classical function evaluation. 
 
We consider the l2-MBQC; see \cite{RR13} or \cite{Abram3} for the full definition. This MBQC-variant formalizes the original scheme \cite{RB01}, and is characterized by two properties: (i) there is a choice between two measurement bases per local system, and (ii) the classical side-processing is mod 2 linear. We have the following result \cite{Abram3}, specialized to a single bit of output.
\begin{Theorem}[{\em{\cite{Abram3}}}]\label{T1}
Let $f: (\mathbb{Z}_2)^m \longrightarrow \mathbb{Z}_2$ be a Boolean function, and $\H(f,{\cal{L}})$ its Hamming distance to the closest linear function. For each l2-MBQC with contextual fraction ${\sf{CF}}(\rho)$ that computes $f$ with average success probability $\overline{p}_S$ over all $2^m$ possible inputs it holds that 
\begin{equation}\label{pSbd}
\overline{p}_S\leq 1- \frac{(1-{\sf{CF}}(\rho))\, \H(f,{\cal{L}})}{2^m}.
\end{equation}
\end{Theorem}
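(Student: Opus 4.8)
The plan is to follow the same resource-theoretic decomposition that promoted Theorem~\ref{T_prob_beta} to Theorem~\ref{T_frac_beta}, now applied to the computational success probability rather than to a contextuality witness. First I would write the empirical model $e$ arising from the l2-MBQC on the state $\rho$ in the form $e=\CFs(\rho)\,e^{C}+\NCFs(\rho)\,e^{NC}$, exactly as in Eq.~(\ref{RhoSplit}). For each fixed input $\mathbf{x}\in(\Z_2)^m$ the probability $\Pr[o(\mathbf{x})=f(\mathbf{x})]$ of a correct output is a sum of outcome probabilities drawn from the distribution $e_{M(\mathbf{x})}$ attached to the context $M(\mathbf{x})$ selected by that input, hence a linear functional of $e$; averaging over the $2^m$ inputs preserves linearity. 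Therefore $\overline{p}_S=\CFs(\rho)\,\overline{p}_S^{\,C}+\NCFs(\rho)\,\overline{p}_S^{\,NC}$, and it remains to bound the two parts separately.

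For the contextual part I would use the trivial bound $\overline{p}_S^{\,C}\leq 1$. The crux is the non-contextual part. Here I would invoke the defining features of an l2-MBQC: the measurement basis at each local system is selected by a single bit that is an affine function of the input, and the one-bit output is a mod-2 linear combination of the local outcomes. Given a deterministic non-contextual value assignment $\lambda$, each local outcome $\lambda(O_i^{q_i(\mathbf{x})})$ depends only on the basis-selecting bit $q_i(\mathbf{x})$, and any map $\{0,1\}\to\Z_2$ is affine; since $q_i(\mathbf{x})$ is itself affine in $\mathbf{x}$, each local outcome is affine in $\mathbf{x}$, and the output---a $\Z_2$-linear combination of these terms---is therefore an affine (``linear'') function $\ell_\lambda$ of the input. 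The success probability of $\lambda$ against $f$ then equals $1-\H(f,\ell_\lambda)/2^m\leq 1-\H(f,{\cal{L}})/2^m$, and averaging over the distribution that realizes $e^{NC}$ retains this bound by convexity, giving $\overline{p}_S^{\,NC}\leq 1-\H(f,{\cal{L}})/2^m$.

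Combining the two estimates yields $\overline{p}_S\leq \CFs(\rho)+\NCFs(\rho)\bigl(1-\H(f,{\cal{L}})/2^m\bigr)=1-\NCFs(\rho)\,\H(f,{\cal{L}})/2^m$, and substituting $\NCFs(\rho)=1-\CFs(\rho)$ produces Eq.~(\ref{pSbd}). I expect the main obstacle to be the non-contextual step: one must argue carefully that a non-contextual assignment, combined with the affine basis choice and mod-2 linear post-processing characteristic of the l2-MBQC, can realize only an affine function of the input---this is the Anders--Browne mechanism \cite{AB} underlying the principle that non-contextual l2-MBQCs compute only linear functions. A secondary delicacy is the bookkeeping that $\overline{p}_S$ is genuinely linear in $e$ across all $2^m$ inputs at once, so that the single global decomposition $e=\CFs(\rho)\,e^{C}+\NCFs(\rho)\,e^{NC}$, rather than an input-dependent one, may legitimately be inserted.
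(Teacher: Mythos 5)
Your proof is correct and takes essentially the approach the paper relies on: the paper does not prove Theorem~\ref{T1} itself (it imports it from \cite{Abram3}), but it presents it as the MBQC counterpart of Theorem~\ref{T_frac_beta}, whose proof is precisely your decomposition $e=\CFs(\rho)\,e^{C}+\NCFs(\rho)\,e^{NC}$ with the trivial bound $\overline{p}_S^{\,C}\leq 1$ on the contextual part and the \cite{RR13}-type bound $\overline{p}_S^{\,NC}\leq 1-\H(f,{\cal{L}})/2^m$ on the non-contextual part. The two subtleties you flag---linearity of $\overline{p}_S$ in the empirical model, and the argument that a deterministic non-contextual assignment combined with mod-2 linear side-processing (including adaptive basis choices) yields only affine outputs---are exactly the ingredients supplied by \cite{Abram3} and \cite{RR13}, so nothing essential is missing.
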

This result is a counterpart to Theorem~\ref{T_frac_beta}   with $f=\chi$, adjusted to MBQC. It is instructive to first look at two limiting cases of Theorem~\ref{T1}. For ${\sf{CF}}(\rho)=1$, i.e., strong contextuality, it holds that $\overline{p}_S\leq 1$, and  the theorem is not constraining. For the opposite limit of a non-contextual hidden variable model, ${\sf{CF}}(\rho)=0$, the bound in Theorem~\ref{T1} reduces to $\overline{p}_S \leq 1- \H(f,{\cal{L}})/2^m$, which is the result of \cite{RR13}. 

Now in general, for a given non-linear function $f$,  the larger the contextual fraction ${\sf{CF}}(\rho)$, the higher the potentially reachable success probability of function evaluation. In this sense, the contextual fraction is an indicator of computational power of MBQC.\medskip

The evaluation of Boolean functions by classical means and via MBQC are related as follows.
\begin{Theorem}\label{T2b}
Consider an l2-MBQC with contextual fraction ${\sf{CF}}(\rho)$, probabilistically evaluating a Boolean function $f: (\mathbb{Z}_2)^m \longrightarrow \mathbb{Z}_2$ that has a Hamming distance $\H(f,{\cal{L}})$ to the set of linear functions. If the closest linear function $g$ to $f$ is known, then  the operational cost $C_\text{op}$ of classically computing $f$ with at least the same probability of success are bounded by
$$
C_\text{op} \leq O\left(m\, \log_2{\sf{CF}}(\rho)\, \H(f,{\cal{L}})\right).
$$
\end{Theorem}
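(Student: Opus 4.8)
The plan is to mirror the proof of Theorem~\ref{T2}, replacing the accounting of \emph{memory} cost by that of \emph{operational} (time) cost, and to invoke Theorem~\ref{T1} for the success-probability bookkeeping. By hypothesis the closest linear function $g$ to $f$ is known. First I would record that a mod~2 linear function on $m$ bits is computable in $O(m)$ elementary operations, since evaluating $g(a)$ amounts to taking the parity of the bits of $a$ in the support of $g$. The classical routine then proceeds exactly as in Theorem~\ref{T2}: on input $a$, compute $g(a)$ and, if $a$ appears in a stored list $L$ of exceptions, flip the output; otherwise report $g(a)$.

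The second step is to fix the size of $L$ using Theorem~\ref{T1}. Outputting $g$ alone fails on the $\H(f,{\cal{L}})$ inputs where $f$ and $g$ disagree, giving success probability $1-\H(f,{\cal{L}})/2^m$. To attain the success probability $\overline{p}_S \leq 1-(1-{\sf{CF}}(\rho))\,\H(f,{\cal{L}})/2^m$ reachable by the l2-MBQC, the routine must correct enough exceptions so that at most $(1-{\sf{CF}}(\rho))\,\H(f,{\cal{L}})$ inputs remain wrong; hence it suffices to store
$$
|L| = \lceil {\sf{CF}}(\rho)\, \H(f,{\cal{L}}) \rceil
$$
exceptions, each an $m$-bit input together with the required correction. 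This is the same count of exceptions that appeared in Theorem~\ref{T2}, and the rounding identity $\lceil x \rceil + \lfloor \H(f,{\cal{L}})-x \rfloor = \H(f,{\cal{L}})$ certifies correctness of the matched success probability.

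The third step is to count operational cost. Evaluating $g$ costs $O(m)$. The remaining cost is that of deciding whether $a \in L$ and retrieving the correction. Storing $L$ as a sorted array (or balanced search tree) keyed on the $m$-bit input, membership is decided by binary search using $O(\log_2 |L|)$ comparisons, each of $m$-bit strings costing $O(m)$. The total operational cost per input is therefore
$$
C_\text{op} = O(m) + O(m \log_2 |L|) = O\!\left( m\, \log_2 \left( {\sf{CF}}(\rho)\, \H(f,{\cal{L}}) \right)\right),
$$
which is the claimed bound (the logarithm being taken of the number of exceptions).

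I expect the main obstacle to be the choice of data structure that realizes the logarithmic (rather than linear) dependence on the number of exceptions: a naive sequential scan of $L$ would yield $O(m\,|L|)$, i.e.\ a factor ${\sf{CF}}(\rho)\,\H(f,{\cal{L}})$ instead of its logarithm. Committing to a sorted exception list and binary search---and arguing that the one-time preprocessing needed to build it is not charged to the per-input operational cost---is what delivers the stated $\log_2$ factor. A secondary point to handle carefully is the matching of success probabilities: one must invoke Theorem~\ref{T1} to certify that correcting exactly $\lceil {\sf{CF}}(\rho)\,\H(f,{\cal{L}}) \rceil$ inputs already attains at least the MBQC's success probability, paralleling the correctness argument in Theorem~\ref{T2}.
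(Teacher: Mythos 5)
Your proposal is correct and follows essentially the same route as the paper's own proof: the same exception-list algorithm built on the known linear approximation $g$, the same list size $\lceil {\sf{CF}}(\rho)\,\H(f,{\cal{L}})\rceil$ matched against Theorem~\ref{T1}, and the same cost accounting of $O(m)$ for evaluating $g$ plus $O(m\log_2|L|)$ for the lookup. Your explicit attention to the sorted-list/binary-search data structure and to not charging the preprocessing (justified in the paper by modelling the classical competitor as a hard-wired dedicated device) simply makes explicit what the paper leaves implicit.
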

Thus, the evaluation of a given function with a target probability of success can be a hard task for classical computers only if the contextual fraction of the equivalent MBQC substantially deviates from zero.

As for the classical computational model whose performance is compared to the MBQC, we consider a dedicated device hard-wired to compute $f$. The MBQC itself---with fixed resource state and measurement sequence---is a hard-wired device too, and thus the comparison is fair. Using a dedicated device to classically compute the function $f$ justifies the assumption of Theorem~\ref{T2b} that the best linear approximation $g$ to $f$ is known.

Theorem~\ref{T2b} is a counterpart to similar results invoking entanglement \cite{Vidal}, \cite{MVdN} or the negativity of Wigner functions and similar quasi-probability distributions \cite{Pasha}---some applying to MBQC and others to the circuit model and quantum computation with magic states. For reference, we quote here  a result on the role of entanglement in MBQC\footnote{Theorem~\ref{T3} as stated here is a combination of Theorems 4 and 6 in \cite{MVdN}. Their Theorem~4 is broader in that it does not only refer to graph states but all quantum states of a fixed number of spins. However, it also comes with additional conditions concerning the knowledge of the optimal tensor network decomposition of the state. For graph states, these extra conditions can be eliminated, cf. Theorem~6 in \cite{MVdN}.} \cite{MVdN},
\begin{Theorem}[{\em{\cite{MVdN}}}]\label{T3}
Let $|G\rangle$ be an $n$-party graph state, and be $\tau$ the entanglement rank width of $|G\rangle$.  Then, any MBQC on $|G\rangle$ can be simulated classically in $O(n\,poly(2^\tau))$ time.
\end{Theorem}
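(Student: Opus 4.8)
The plan is to reduce the classical simulation of an MBQC on $\ket{G}$ to the contraction of a tree tensor network whose bond dimensions are controlled by the rank-width $\tau$, and then to bound the contraction cost. An MBQC on $\ket{G}$ consists of preparing the graph state, performing adaptive single-qubit measurements, and applying efficiently computable classical side-processing to the outcomes; since the side-processing is cheap, the whole task reduces to sampling the joint distribution of the local measurement outcomes on $\ket{G}$. I would organize this sampling around a \emph{rank-decomposition} of $G$: a subcubic tree $\mathcal{T}$ whose leaves are the $n$ vertices of $G$, such that deleting any internal edge of $\mathcal{T}$ induces a bipartition $(A,\bar A)$ of the vertices with cut-rank $\operatorname{rk}_{\Z_2}\!\big(\Gamma[A,\bar A]\big)\le\tau$, where $\Gamma$ is the adjacency matrix of $G$ over $\Z_2$. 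By definition of rank-width such a $\mathcal{T}$ exists with this bound, and crucially, for a graph state the relevant cut-ranks are read off directly from $\Gamma$; this is what lets one dispense with the optimal-decomposition oracle needed for general states, and it is precisely the gap between the graph-state case and the general case flagged in the footnote to this theorem.

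The central step is the identification of the Schmidt rank of $\ket{G}$ across a cut with the cut-rank. For any bipartition $(A,\bar A)$, the Schmidt rank of $\ket{G}$ equals $2^{\operatorname{rk}_{\Z_2}(\Gamma[A,\bar A])}$; this is the standard entanglement formula for graph states and follows from writing $\ket{G}=\prod_{\{i,j\}\in E(G)}\mathrm{CZ}_{ij}\,\ket{+}^{\otimes n}$ and computing the reduced state on $A$. Applying this to every internal edge of $\mathcal{T}$ shows that $\ket{G}$ admits a tree tensor network supported on $\mathcal{T}$ in which each bond carries dimension at most $2^\tau$: each leaf holds one physical qubit, each internal (degree-3) node holds an isometry stitching its three incident bonds, and the lemma supplies the uniform bound $2^\tau$ on every bond.

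With this representation in hand I would simulate the MBQC by sampling outcomes qubit by qubit in the protocol's order. Each single-qubit measurement, in whatever (possibly outcome-dependent) basis is prescribed, is incorporated by contracting the corresponding rank-one bra into the relevant leaf tensor. Marginal and conditional outcome probabilities are obtained by contracting $\mathcal{T}$, which for a tree is a sweep of message passing from the leaves toward a chosen root; a single such contraction costs $O(n\,\mathrm{poly}(2^\tau))$, since each of the $O(n)$ nodes contracts tensors whose legs have dimension at most $2^\tau$ (and physical dimension $2$). Adaptivity is handled by fixing the sampled outcomes one at a time and refreshing only the messages along the tree path affected by each newly measured leaf, so the total running time stays within $O(n\,\mathrm{poly}(2^\tau))$. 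The recorded outcomes are then fed to the efficient classical side-processing to yield the MBQC output, giving the claimed bound.

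I expect the crux to be twofold. First, one must establish the Schmidt-rank/cut-rank correspondence cleanly and verify that it assembles into a \emph{globally} consistent tree tensor network on $\mathcal{T}$, with all bonds simultaneously bounded by $2^\tau$, rather than merely bounding entanglement across isolated cuts. Second, and more delicate, is the bookkeeping that keeps the adaptive sampling within a single factor of $n$: naively recontracting $\mathcal{T}$ at each of the $n$ measurement steps would cost $O(n^2\,\mathrm{poly}(2^\tau))$, so one must argue that on a tree the conditional probabilities can be maintained incrementally, reusing previously computed messages and updating only those on the affected path. Controlling (or rebalancing) the depth of $\mathcal{T}$ so that these updates do not accumulate an extra polynomial-in-$n$ factor is where the combination of Theorems~4 and~6 of \cite{MVdN} does its real work, and it is the step I would treat most carefully.
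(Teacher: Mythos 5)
There is an important mismatch to flag first: the paper does not prove this statement at all. Theorem~\ref{T3} is imported verbatim from \cite{MVdN} (the footnote says it combines Theorems 4 and 6 of that reference) and is quoted solely to exhibit the structural parallel with Theorem~\ref{T2b}. So there is no internal proof to compare your attempt against; the only meaningful benchmark is the argument of \cite{MVdN} itself, and your outline does track that route: a rank decomposition of $G$ on a subcubic tree, the identity equating the Schmidt rank of $\ket{G}$ across $(A,\bar A)$ with $2^{\mathrm{rk}_{\Z_2}\Gamma[A,\bar A]}$, a tree tensor network with every bond bounded by $2^\tau$ (which does assemble globally, by successive Schmidt decompositions along the tree edges), and message-passing contraction with adaptive updates. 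Those ingredients are sound.

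The genuine gaps are precisely the two points you flag but do not close, and the first is more serious than your phrasing suggests. (i) Existence of a width-$\tau$ decomposition is definitional, but the simulator must \emph{construct} one, and the fact that the cut-rank of any \emph{given} bipartition can be ``read off directly from $\Gamma$'' does not produce a good tree: finding an optimal rank decomposition is NP-hard in general. What actually eliminates the decomposition oracle of Theorem 4 of \cite{MVdN} in the graph-state case (their Theorem 6) is the Oum--Seymour fixed-parameter approximation algorithm, which for fixed $\tau$ outputs in polynomial time a subcubic tree of width at most $3\tau+1$; the constant-factor blowup is then absorbed into $\mathrm{poly}(2^\tau)$. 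Your sentence claiming this is ``what lets one dispense with the optimal-decomposition oracle'' names the right milestone but supplies no mechanism for it. (ii) Your incremental sampling scheme costs $O(\mathrm{depth}(\mathcal{T})\cdot \mathrm{poly}(2^\tau))$ per measured qubit, and a subcubic tree on $n$ leaves can have depth $\Theta(n)$, so as written the total is $O(n^2\,\mathrm{poly}(2^\tau))$, not the claimed $O(n\,\mathrm{poly}(2^\tau))$; you correctly identify rebalancing (or an amortized message-reuse argument) as the missing device, but you defer it rather than provide it. Until (i) and (ii) are filled in, the proposal establishes only an oracle-assisted simulation with quadratic overhead in $n$, which is weaker than the statement as quoted.
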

Therein, the entanglement rank width is a proper entanglement monotone \cite{MVdN}. MBQC can solve a hard computational problem only if the entanglement in the resource state---as measured by the specific monotone of rank width---is substantial.

The structural likeness of Theorem~\ref{T2b} and Theorem~\ref{T3} is apparent, and, in fact, the same structure is present in the other results mentioned: All these theorems state an upper bound on the classical computational cost of reproducing the output of the quantum computation; and this upper bound is a monotonically increasing function in some measure of quantumness.

But there is also a difference. Theorem~\ref{T3} and the other results mentioned compete with the quantum protocol by simulating it classically. Theorem~\ref{T2b} admits further generality.  In this setting, we merely require of the classical algorithm that it evaluates the same function $f$ with the same average success probability. The theorem is agnostic about whether the classical algorithm achieves this by simulating the quantum protocol or by other means. \smallskip

The proof of Theorem~\ref{T2b} is very similar to the proof of Theorem~\ref{T2}.

{\em{Proof of Theorem~\ref{T2b}.}} We prove the statement by explicitly constructing an algorithm that computes $f$ and satisfies the conditions of the theorem. We consider family of algorithms to compute $f$ which use the best linear approximation $g$ of $f$ and a list $L$ of exceptions. Any list $L$ is such that $x\in L$ only if $f(x) \neq g(x)$, and otherwise the size $|L|$ of $L$ is a free parameter. The algorithms are as follows: Given an input $\textbf{i}$, if $\textbf{i} \in L$ then the output is $o= g(\textbf{i}) \oplus 1$, and otherwise the output is $o=g(\textbf{i})$. 
 
Within this family of classical algorithms for computing $f$, we choose a list $L$ of exceptions such that $|L| = \lceil{\sf{CF}}(\rho) \H(f,{\cal{L}}) \rceil$. The resulting function evaluations thus fails for $\lfloor{(1-\sf{CF}}(\rho))\H(f,{\cal{L}})\rfloor$ of the $2^m$ inputs, and the average success probability of function evaluation therefore is
$$
\overline{p}_S = 1 - \frac{\lfloor{(1-\sf{CF}}(\rho))\H(f,{\cal{L}})\rfloor}{2^m}.
$$
This equals (or slightly exceeds by virtue of rounding) the upper limit of what the MBQC with contextual fraction ${\sf{CF}(\rho)}$ can reach, cf. Theorem~\ref{T1}. The algorithm is thus correct.

The algorithm requires to evaluate the function $g$ on an input $\textbf{i}$, which takes $2m$ binary additions and multiplications, the lookup of the input $\textbf{i}$ in the list $L$, which takes $O(m\log_2|L|)$ operations, and the preparation of the output, which takes a constant number of operations. The operational cost is thus dominated by the lookup of the input $\textbf{i}$ in the list $L$, $C_\text{op}=O(m \log_2 {\sf{CF}(\rho)} \H(f,{\cal{L}}))$. The cost of the optimal algorithm to compute $f$ is the same or less. $\Box$\

\end{document}